\numberwithin{equation}{section} %\usepackage{amssymb}
\newcommand{\bea}{\begin{eqnarray}}
\newcommand{\eea}{\end{eqnarray}}
\newcommand{\ba}{\begin{array}}
\newcommand{\ea}{\end{array}}
\newcommand{\edc}{\end{document}}
\newcommand{\bc}{\begin{center}}
\newcommand{\ec}{\end{center}}
\newcommand{\be}{\begin{equation}}
\newcommand{\ee}{\end{equation}}
\newcommand{\dsf}{\displaystyle\frac}
\def\cb{{\mathcal B}}
\def\ce{{\mathcal E}}
\def\cg{{\mathcal G}}
\def\bc{{\mathbb C}}
\def\bn{{\mathbb N}}
\def\bq{{\mathbb Q}}
\def\br{{\mathbb R}}
\def\bz{{\mathbb Z}}
\def\g{\gamma}  \def\G{\Gamma}
\def\e{\epsilon}
\def\l{\lambda} 
\def\m{\mu}
\def\n{\nu}
\def\r{\rho}
\def\s{\sigma}
\def\w{\omega} \def\Om{\Omega}
\def\h{{\mathbf{h}}}
\def\sb{{\mathbf{s}}}
\newtheorem{thm}{Theorem}[section]
\newtheorem{lem}[thm]{Lemma}
\newtheorem{cor}[thm]{Corollary}
\newtheorem{prop}[thm]{Proposition}
\theoremstyle{remark}
\newtheorem{rem}{Remark}[section]
\begin{document}

\title[$p$-adic $\lambda$-model]
{Renormalization method in $p$-adic $\lambda$-model on the Cayley
tree}

%\thanks{On leave from Department of Mechanics and Mathematics,
%National University of Uzbekistan, Tashkent, 700174, Uzbekistan}

\author{Farrukh Mukhamedov}
\address{Farrukh Mukhamedov\\
 Department of Computational \& Theoretical Sciences\\
Faculty of Science, International Islamic University Malaysia\\
P.O. Box, 141, 25710, Kuantan\\
Pahang, Malaysia} \email{{\tt far75m@yandex.ru} {\tt farrukh\_m@iium.edu.my}}

\begin{abstract}
In this present paper, it is proposed the renormalization techniques
in the investigation of phase transition phenomena in $p$-adic
statistical mechanics. We mainly study $p$-adic $\l$-model on the
Cayley tree of order two. We consider generalized $p$-adic quasi
Gibbs measures depending on parameter $\r\in\bq_p$, for the
$\l$-model. Such measures are constructed by means of certain
recurrence equations. These equations define a dynamical system. We
study two regimes with respect to parameters. In the first regime we
establish that the dynamical system has one attractive and two
repelling fixed points, which predicts the existence of a phase
transition. In the second regime the system has two attractive and
one neutral fixed points, which predicts the existence of a quasi
phase transition. A main point of this paper is to verify (i.e.
rigorously prove)  and confirm that the indicated predictions (via
dynamical systems point of view) are indeed true.
 To establish the main result, we employ the methods of
$p$-adic analysis, and therefore, our results are not valid in the
real setting.

\vskip 0.3cm \noindent {\it
Mathematics Subject Classification}: 82B26, 46S10, 12J12, 39A70, 47H10, 60K35.\\
{\it Key words}: $p$-adic numbers, $\l$-model; $p$-adic quasi Gibbs
measure, strong phase transition, Cayley tree.
\end{abstract}

\maketitle

\section{introduction}

After WilsonТs seminal work in the early 1970's \cite{Wil}, based
also on the ground breaking foundations laid by Kadanoff, Widom,
Michael Fisher \cite{Fish}, and others in the preceding decade, the
renormalization group (RG) has had a profound impact on modern
statistical physics. Not only do renormalization group techniques
provide a powerful tool to analytically describe and quantitatively
capture both static and dynamic critical phenomena near continuous
phase transitions that are governed by strong interactions,
fluctuations, and correlations. RG presents a conceptual framework
and mathematical language that has become ubiquitous in the
theoretical description of many complex interacting many-particle
systems encountered in nature (see for review \cite{Ta}).

The renormalization method is then applied in statistical mechanics
and yielded lots of interesting results. Since such investigations
of phase transitions of spin models on hierarchical lattices showed
that they make the exact calculation of various physical quantities
\cite{Bax,G}. One of the simplest hierarchical lattice is Cayley
tree or Bethe lattice (see \cite{Ost}). This lattice is not a
realistic lattice, however, investigations of phase transitions of
spin models on trees like Cayley tree  showed that they make the
exact calculation of various physical quantities \cite{Roz}. It is
believed that several among its interesting thermal properties could
persist for regular lattices, for which the exact calculation is far
intractable. Illustrations of the renormalization methods are widely
shown in the study of Ising model \cite{Bax}, since it has wide
theoretical interest and practical applications. Therefore, one of
generalizations of the Ising model is so-called $\l$-model on the
Cayley tree (see \cite{R1,M04}). Such a model has enough rich
structure to illustrate almost every conceivable nuance of
statistical mechanics. We have to stress that one of the central
problems in the theory of Gibbs measures of lattice systems is to
describe infinite-volume (or limiting) Gibbs measures corresponding
to a given Hamiltonian. A complete analysis of this set is often a
difficult problem (see for review \cite{G,Roz}).

On the other hand, there are many investigates have been done to
discuss and debate the question due to the assumption that $p$-adic
numbers provide a more exact and more adequate description of
microworld phenomena (see for example \cite{DFR,Sn,V1}). Therefore,
starting the 1980s, various models described in the language of
$p$-adic analysis have been actively studied
\cite{ADFV},\cite{FO},\cite{MP}. The well-known studies in this area
are primarily devoted to investigating quantum mechanics models
using equations of mathematical physics
\cite{ABK,ADV,Kh1,Kh2,V2,VVZ}. We refer the reader to \cite{DKKV}
for recent development of the subject.

One of the first applications of $p$-adic numbers in quantum physics
appeared in the framework of quantum logic in \cite{BC}. This model
is especially interesting for us because it could not be described
by using conventional real valued probability (see
\cite{Kh2,Ko,MP,VVZ}). Therefore, $p$-adic probability models were
investigated in \cite{K3,KhN,KYR}. Using that $p$-adic measure
theory in \cite{Kh07,KL,Lu}, the theory of $p$-adic and
non-Archimedean stochastic processes has been developed. These
investigations allowed us to construct wide classes of stochastic
processes using finite dimensional probability distributions
\cite{GMR}. In \cite{GRR},\cite{Mq}-\cite{MDA},\cite{MR1,MR2} it has
been developed $p$-adic statistical mechanics within the scheme of
the theory of $p$-adic probability and $p$-adic stochastic
processes. Namely, we have studied $p$-adic Ising and Potts models
with nearest neighbor interactions on Cayley trees.

In the present paper, we propose to study phase transition phenomena
of $p$-adic statistical models by means of renormalization methods
in the measure-theoretical scheme. Note that the renormalization
method is closely related to the investigation of dynamical system
associated with a given model. Therefore, in what follows, methods
of $p$-adic dynamical systems and $p$-adic probability measures will
be used. In this paper, we illustrate our propose in the study of
$p$-adic $\l$-model which was started in \cite{KM,KMR}. In this
model spin takes two different values. In the mentioned papers we
studied only the uniqueness of $p$-adic Gibbs measures of the model.
Recently, in \cite{M12,M13} it was introduced two kind of notions of
phase transition: \textit{phase transition} and \textit{quasi phase
transition}. Note that the investigate of phase transitions by
dynamical system approach, in real case, has greatly enhanced
understanding of complex properties of models. The interplay of
statistical mechanics with chaos theory has even led to novel
conceptual frameworks in different physical settings \cite{E}.
Therefore, a main aim of this paper is to apply and verify
renormalization method to the existence of phase transitions.

Let us highlight the organization of the paper. In section 2 we
collect necessary definitions and preliminary results which will be
used in the paper. In section 3 we provide a measure-theoretical
construction of generalized $p$-adic quasi Gibbs measures for the
$\l$-model. Such kind of measures exist if the interacting functions
satisfy certain recurrence equation. In section 4 we consider two
regimes with respect to a parameters $A$ and $C$. In this section we
prove the existence of generalized $p$-adic Gibbs measures in both
regimes. The obtained recurrence equations define a dynamical
system. In the first regime we establish that the dynamical system
has one attractive and two repelling fixed points, which predicts
the existence of a phase transition. In the second regime the system
has two attractive and one neutral fixed points, which predicts the
existence of a quasi phase transition. In section 5, we verify (i.e.
rigorously prove)  and confirm that the indicated predictions (via
dynamical systems point of view) are indeed true.
 To establish the main result, we employ the methods of
$p$-adic analysis, and therefore, our results are not valid in the
real setting.

\section{Preliminaries}

\subsection{$p$-adic numbers}

In what follows $p$ will be a fixed prime number. The set $\bq_p$ is
defined as a completion of the rational numbers $\bq$ with respect
to the norm $|\cdot|_p:\bq\to\br$ given by
\begin{eqnarray}
|x|_p=\left\{
\begin{array}{c}
  p^{-r} \ x\neq 0,\\
  0,\ \quad x=0,
\end{array}
\right.
\end{eqnarray}
here, $x=p^r\frac{m}{n}$ with $r,m\in\bz,$ $n\in\bn$,
$(m,p)=(n,p)=1$. The absolute value $|\cdot|_p$ is non-Archimedean,
meaning that it satisfies the \textit{strong triangle inequality}
$|x + y|_p \leq \max\{|x|_p, |y|_p\}$. We recall a nice property of
the norm, i.e. if $|x|_p>|y|_p$ then $|x+y|_p=|x|_p$. Note that this
is a crucial property which is proper to the non-Archimedenity of
the norm.

Any $p$-adic number $x\in\bq_p$, $x\neq 0$ can be uniquely represented in the form
\begin{equation}\label{canonic}
x=p^{\g(x)}(x_0+x_1p+x_2p^2+...),
\end{equation}
where $\g=\g(x)\in\bz$ and $x_j$ are integers, $0\leq x_j\leq p-1$,
$x_0>0$, $j=0,1,2,\dots$ In this case $|x|_p=p^{-\g(x)}$.

We recall that an integer $a\in \bz$ is called {\it a quadratic
residue modulo $p$} if the equation $x^2\equiv a(\textrm{mod $p$})$
has a solution $x\in \bz$.

\begin{lem}\label{quadrat} \cite{Ko} In order that the equation
$$
x^2=a, \ \ 0\neq a=p^{\g(a)}(a_0+a_1p+...), \ \ 0\leq a_j\leq p-1, \
a_0>0
$$
has a solution $x\in \bq_p$, it is necessary and sufficient that the
following conditions are fulfilled:
\begin{enumerate}
\item[(i)] $\g(a)$ is even;\\
\item[(ii)] $a_0$ is a quadratic residue modulo $p$ if $p\neq 2$, and moreover
$a_1=a_2=0$ if $p=2$.
\end{enumerate}
\end{lem}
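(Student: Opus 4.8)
The plan is to separate the problem into a \emph{valuation} part and a \emph{unit} part, and then to treat the cases $p \neq 2$ and $p = 2$ by two different forms of Hensel's lemma.

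First I would write any prospective solution as $x = p^{\g(x)} u$ with $|u|_p = 1$, so that $x^2 = p^{2\g(x)} u^2$. Comparing with $a = p^{\g(a)} v$, where $v = a_0 + a_1 p + \cdots$ is a $p$-adic unit, and using uniqueness of the canonical representation \eqref{canonic}, a solution exists if and only if $2\g(x) = \g(a)$ and $u^2 = v$. The first equality forces $\g(a)$ to be even, which is condition (i); its necessity is immediate since $|x^2|_p = p^{-2\g(x)}$ must equal $|a|_p = p^{-\g(a)}$. This reduces everything to deciding when a unit $v$ with leading digit $a_0$ is a square in the group of units.

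For odd $p$, I would apply the classical Hensel lemma to $f(X) = X^2 - v$. If $a_0$ is a quadratic residue modulo $p$, pick $x_0 \in \{1, \dots, p-1\}$ with $x_0^2 \equiv a_0 \pmod p$; then $|f(x_0)|_p \le p^{-1}$, while $f'(x_0) = 2x_0$ is a unit, so $|f(x_0)|_p < |f'(x_0)|_p^2 = 1$ and Hensel produces a root $u \in \bz_p$ with $u \equiv x_0 \pmod p$. Conversely, if $v = u^2$, then reduction modulo $p$ gives $a_0 \equiv u_0^2 \pmod p$, so $a_0$ is a quadratic residue; this settles condition (ii) in the case $p \neq 2$.

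The main obstacle is the case $p = 2$, where $f'(x_0) = 2x_0$ is never a unit and the classical Hensel argument breaks down. Here I would first establish the characterization that a $2$-adic unit $v$ is a square if and only if $v \equiv 1 \pmod 8$: necessity follows from $(2k+1)^2 \equiv 1 \pmod 8$, and for sufficiency I would invoke the \emph{strong} form of Hensel's lemma, namely that $|f(x_0)|_p < |f'(x_0)|_p^2$ already suffices for a root. Taking $x_0 = 1$ gives $|f(1)|_2 = |1 - v|_2 \le 2^{-3}$ and $|f'(1)|_2^2 = |2|_2^2 = 2^{-2}$, so the strong Hensel hypothesis holds and a square root exists. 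Finally, since $v$ is a unit its leading digit is $a_0 = 1$, and writing $v \equiv 1 \pmod 8$ as $1 + 2a_1 + 4a_2 \equiv 1 \pmod 8$ with $a_1, a_2 \in \{0,1\}$ forces exactly $a_1 = a_2 = 0$, which is condition (ii) for $p = 2$. Combining the two cases with the valuation reduction completes the equivalence.
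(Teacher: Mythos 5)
Your proof is correct. Note that the paper itself gives no proof of this lemma---it is quoted as a known result with the citation \cite{Ko}---and your argument (splitting off the valuation to reduce to the unit $v=a_0+a_1p+\cdots$, then applying classical Hensel at odd $p$ where $f'(x_0)=2x_0$ is a unit, and the strong form $|f(x_0)|_p<|f'(x_0)|_p^2$ at $p=2$ with $x_0=1$ to show a $2$-adic unit is a square iff it is $\equiv 1 \pmod 8$, i.e.\ $a_1=a_2=0$) is precisely the standard treatment found in the cited reference, so no discrepancy arises.
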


For each $a\in \bq_p$, $r>0$ we denote $$ B(a,r)=\{x\in \bq_p :
|x-a|_p< r\}, \ \ \bz_{p}=\left\{ x\in \bq_{p}:\
|x|_{p}\leq1\right\}.$$

Recall that the $p$-adic exponential is defined by
$$
\exp_p(x)=\sum_{n=1}^{\infty}\dsf{x^n}{n!},
$$
which converges for every $x\in B(0,p^{-1/(p-1)})$. It is known
\cite{Ko} that for any $x\in B(0,p^{-1/(p-1)})$ one has
$$ |\exp_p(x)|_p=1,\ \ \ |\exp_p(x)-1|_p=|x|_p<1. $$

Put
\begin{equation}\label{Exp}
\ce_p=\{x\in\bq_p: \ |x|_p=1, \ \ |x-1|_p<p^{-1/(p-1)}\}.
\end{equation}

Note that the basics of $p$-adic analysis, $p$-adic mathematical
physics are explained in \cite{Ko,S,VVZ}.

Now we recall some standard terminology of the theory of dynamical
systems (see for example \cite{AKh},\cite{KhN}).

Let $(f,B)$ be a dynamical system in $\bq_p$, where $f: x\in B\to
f(x)\in B$ is some function and $B=B(a,r))$ or $\bq_p$. Denote
$x^{(n)}=f^n(x^{(0)})$, where $x^0\in B$ and
$f^n(x)=\underbrace{f\circ\dots\circ f(x)}_n$.
 If $f(x^{(0)})=x^{(0)}$ then $x^{(0)}$
is called a {\it fixed point}. Let $x^{(0)}$ be a fixed point of an
analytic function $f(x)$. Set
$$
\l=\frac{d}{dx}f(x^{(0)}).
$$

The point $x^{(0)}$ is called {\it attractive} if $0\leq |\l|_p<1$,
{\it neutral} if $|\l|_p=1$, and {\it repelling} if $|\l|_p>1$.

It is known \cite{KhN} that if a fixed point $x^{(0)}$ is attractive
then there exists a neighborhood $U(x^{(0)})(\subset B)$ of
$x^{(0)}$ such that for all points $y\in U(x^{(0)})$ it holds
$\lim\limits_{n\to\infty}f^{(n)}(y)=x^{(0)}$. If a fixed point
$x^{(0)}$ is  repelling, then there  exists a neighborhood
$U(x^{(0)})$ of $x^{(0)}$ such that $|f(x)-x^{(0)}|_p>|x-x^{(0)}|_p$
for $x\in U(x^{(0)})$, $x\neq x^{(0)}$.

\subsection{$p$-adic measure}

Let $(X,\cb)$ be a measurable space, where $\cb$ is an algebra of
subsets $X$. A function $\m:\cb\to \bq_p$ is said to be a {\it
$p$-adic measure} if for any $A_1,\dots,A_n\subset\cb$ such that
$A_i\cap A_j=\emptyset$ ($i\neq j$) the equality holds
$$
\mu\bigg(\bigcup_{j=1}^{n} A_j\bigg)=\sum_{j=1}^{n}\mu(A_j).
$$

A $p$-adic measure is called a {\it probability measure} if
$\mu(X)=1$.  One of the important condition (which was already
invented in the first Monna--Springer theory of non-Archimedean
integration \cite{Mona}) is boundedness, namely a $p$-adic
probability measure $\m$ is called {\it bounded} if $\sup\{|\m(A)|_p
: A\in \cb\}<\infty $. We pay attention to an important special case
in which boundedness condition by itself provides a fruitful
integration theory (see for example \cite{Kh07}). Note that, in
general, a $p$-adic probability measure need not be bounded
\cite{Ro}. For more detail information about $p$-adic measures we
refer to \cite{K3,KhN,Ro}.

\subsection{Cayley tree}

Let $\Gamma^k_+ = (V,L)$ be a semi-infinite Cayley tree of order
$k\geq 1$ with the root $x^0$ (whose each vertex has exactly $k+1$
edges, except for the root $x^0$, which has $k$ edges). Here $V$ is
the set of vertices and $L$ is the set of edges. The vertices $x$
and $y$ are called {\it nearest neighbors} and they are denoted by
$l=<x,y>$ if there exists an edge connecting them. A collection of
the pairs $<x,x_1>,\dots,<x_{d-1},y>$ is called a {\it path} from
the point $x$ to the point $y$. The distance $d(x,y), x,y\in V$, on
the Cayley tree, is the length of the shortest path from $x$ to $y$.
$$
W_{n}=\left\{ x\in V\mid d(x,x^{0})=n\right\}, \ \
V_n=\overset{n}{\underset{m=0}{\bigcup}}W_{m}, \ \ L_{n}=\left\{
l=<x,y>\in L\mid x,y\in V_{n}\right\}.
$$
The set of direct successors of $x$ is defined by
$$
S(x)=\left\{ y\in W_{n+1}:d(x,y)=1\right\}, x\in W_{n}.
$$
Observe that any vertex $x\neq x^{0}$ has $k$ direct successors and
$x^{0}$ has $k+1$.

Given a set $A$, by $|A|$ we denote the number of its elements. In
what follows, we need the following equalities:
\begin{eqnarray}\label{WVn1}
&&|W_n|=k^n, \ \ |V_n|=\frac{k^{n+1}-1}{k-1}, \\[2mm]
\label{WVn2} && |W_n|=(k-1)|V_{n-1}|+1, \ \ |V_n|=k|V_{n-1}|+1.
\end{eqnarray}

\section{$p$-adic $\l$ model and its $p$-adic quasi Gibbs measures}

In this section we consider the $p$-adic $\l$-model where spin takes
values in the set $\Phi=\{-1,+1\}$, ($\Phi$ is called a {\it state
space}) and is assigned to the vertices of the tree $\G^k_+=(V,L)$.
A configuration $\s$ on $V$ is then defined as a function $x\in
V\to\s(x)\in\Phi$; in a similar manner one defines configurations
$\s_n$ and $\w$ on $V_n$ and $W_n$, respectively. The set of all
configurations on $V$ (resp. $V_n$, $W_n$) coincides with
$\Omega=\Phi^{V}$ (resp. $\Omega_{V_n}=\Phi^{V_n},\ \
\Omega_{W_n}=\Phi^{W_n}$). One can see that
$\Om_{V_n}=\Om_{V_{n-1}}\times\Om_{W_n}$. Using this, for given
configurations $\s_{n-1}\in\Om_{V_{n-1}}$ and $\w\in\Om_{W_{n}}$ we
define their concatenations  by
$$
(\s_{n-1}\vee\w)(x)= \left\{
\begin{array}{ll}
\s_{n-1}(x), \ \ \textrm{if} \ \  x\in V_{n-1},\\
\w(x), \ \ \ \ \ \ \textrm{if} \ \ x\in W_n.\\
\end{array}
\right.
$$
It is clear that $\s_{n-1}\vee\w\in \Om_{V_n}$.

Assume for each edge $<x,y>\in L$ a function $\l:\Phi\times\Phi\to
{\bz}$ is given. Then the Hamiltonian $H_n:\Om_{V_n}\to\bz$ of the
$p$-adic $\l$-model is defined by
\begin{equation}\label{ham1}
H_{n}(\sigma)=\sum\limits_{< x,y>\in L_{n}}\l(\sigma(x),\sigma(y))
\end{equation}

\begin{rem} This model first has been considered in \cite{KM}. In
the real setting such kind of model was studied in \cite{R1}. We
remark that if one takes $\l(u,v)=Nuv$ for some integer $N$, then
the model \eqref{ham1} reduces to the well-known Ising model (see
\cite{GMR,KM,Mq}).
\end{rem}

Let $\r\in\bq_p$ and assume that $\h:x\in {V\setminus\{x^{(0)}\}}\to
h_x\in\bq_p$ be a mapping. Given $n\in\bn$, let us consider a
$p$-adic probability measure $\m^{(n)}_\h$ on $\Om_{V_n}$ defined by

\begin{equation}\label{measure1}
\m_{\h,\r}^{(n)}(\sigma)=\frac{1}{Z_{n,\r}^{(\h)}}\r^{H_{n}(\sigma)}
{\underset{x\in W_{n}}{\prod}\left(h_{x}\right)^{\sigma(x)}}
\end{equation}
Here, $\s\in\Om_{V_n}$, and $Z_{n,\r}^{(\h)}$ is the corresponding
normalizing factor called a {\it partition function} given by

\begin{equation}\label{partition}
Z_{n,\r}^{(\h)}=\sum_{\s\in\Omega_{V_n}}\r^{H_{n}(\sigma)}{\underset{x\in
W_{n}}{\prod}\left(h_{x}\right)^{\sigma(x)}}.
\end{equation}

We recall \cite{M12} that one of the central results of the theory
of probability concerns a construction of an infinite volume
distribution with given finite-dimensional distributions, which is a
well-known {\it Kolmogorov's extension Theorem} \cite{Sh}. Recall
that a $p$-adic probability measure $\m$ on $\Om$ is
\textit{compatible} with defined ones $\m_\h^{(n)}$ if one has
\begin{equation}\label{CM}
\m(\s\in\Om: \s|_{V_n}=\s_n)=\m^{(n)}_{\h,\r}(\s_n), \ \ \
\textrm{for all} \ \ \s_n\in\Om_{V_n}, \ n\in\bn.
\end{equation}

The existence of the measure $\m$ is guaranteed by the $p$-adic
Kolmogorov's Theorem \cite{GMR,KL}. Namely, if the measures
$\m_{\h,\r}^{(n)}$, $n\geq 1$ satisfy the {\it compatibility
condition}, i.e.
\begin{equation}\label{comp}
\sum_{\w\in\Om_{W_n}}\m^{(n)}_{\h,\r}(\s_{n-1}\vee\w)=\m^{(n-1)}_{\h,\r}(\s_{n-1}),
\end{equation}
for any $\s_{n-1}\in\Om_{V_{n-1}}$, then there is a unique measure
$\m$ on $\Om$ with \eqref{CM}.

Now following \cite{M12} if for some function $\h$ the measures
$\m_{\h,\r}^{(n)}$ satisfy the compatibility condition, then there
is a unique $p$-adic probability measure, which we denote by
$\m_{\h,\r}$, since it depends on $\h$ and $\r$. Such a measure
$\m_\h$ is said to be {\it a generalized $p$-adic quasi Gibbs
measure} corresponding to the $p$-adic $\l$-model. By $Q\cg(H)$ we
denote the set of all generalized $p$-adic quasi Gibbs measures
associated with functions $\h=\{\h_x,\ x\in V\}$. If there are at
least two distinct generalized $p$-adic quasi Gibbs measures
$\m,\n\in Q\cg(H)$ such that $\m$ is bounded and $\n$ is unbounded,
then we say that {\it a phase transition} occurs. By another words,
one can find two different functions $\sb$ and $\h$ defined on $\bn$
such that there exist the corresponding measures $\m_{\sb,\r}$ and
$\m_{\h,\r}$, for which one is bounded, another one is unbounded.
Moreover, if there is a sequence of sets $\{A_n\}$ such that
$A_n\in\Om_{V_n}$ with $|\m(A_n)|_p\to 0$ and $|\n(A_n)|_p\to\infty$
as $n\to\infty$, then we say that there occurs a {\it strong phase
transition}. If there are two different functions $\sb$ and $\h$
defined on $\bn$ such that there exist the corresponding measures
$\m_{\sb,\rho}$, $\m_{\h,\rho}$, and they are bounded, then we say
there is a \textit{quasi phase transition}.

Note that some comparison of these phase transitions with real
counterparts was highlighted in \cite{M12}. In \cite{M12,MA13} the
existence of the strong phase transition for the $q+1$-state Potts
model on the Caylay tree has been proved. In the present paper, we
are going to establish such kind of phenomena for the $\l$-model.

One can prove the following theorem.

\begin{thm}\label{compatibility}
The measures $\m^{(n)}_{\h,\r}$, $ n=1,2,\dots$ (see
\eqref{measure1}), associated with $\l$-model \eqref{ham1}, satisfy
the compatibility condition \eqref{comp} if and only if for any
$x\in V\setminus\{x^{(0)}\}$ the following equation holds:

\begin{equation}\label{canonic3}
h_x^2=\prod_{y\in S(x)}\left(\frac{\r^{\l(1,1)}h_y^2+\r^{\l(1,-1)}}
{\r^{\l(-1,1)}h_y^2+\r^{\l(-1,-1)}}\right).
\end{equation}
\end{thm}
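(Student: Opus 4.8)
The plan is to carry out the standard tree-recursion (cavity) computation directly, now with the $p$-adic weights $\r^{(\cdot)}$ in place of Boltzmann factors. First I would decompose the Hamiltonian according to the layered structure $V_n=V_{n-1}\cup W_n$. The only edges of $L_n$ not already present in $L_{n-1}$ are those joining a vertex $x\in W_{n-1}$ to one of its direct successors $y\in S(x)$, so that for $\s_{n-1}\in\Om_{V_{n-1}}$ and $\w\in\Om_{W_n}$,
\begin{equation*}
H_n(\s_{n-1}\vee\w)=H_{n-1}(\s_{n-1})+\sum_{x\in W_{n-1}}\sum_{y\in S(x)}\l(\s_{n-1}(x),\w(y)).
\end{equation*}
Substituting this into \eqref{measure1} and summing over $\w\in\Om_{W_n}$, the factor $\r^{H_{n-1}(\s_{n-1})}$ pulls out; and because $W_n=\bigsqcup_{x\in W_{n-1}}S(x)$ with the spins $\w(y)$ varying independently, the remaining sum factorizes over $y\in W_n$. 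Each single-site sum over $\w(y)\in\{-1,+1\}$ collapses to $\r^{\l(\s_{n-1}(x),1)}h_y+\r^{\l(\s_{n-1}(x),-1)}h_y^{-1}$, giving
\begin{equation*}
\sum_{\w\in\Om_{W_n}}\m^{(n)}_{\h,\r}(\s_{n-1}\vee\w)=\frac{\r^{H_{n-1}(\s_{n-1})}}{Z_{n,\r}^{(\h)}}\prod_{x\in W_{n-1}}\prod_{y\in S(x)}\Big(\r^{\l(\s_{n-1}(x),1)}h_y+\r^{\l(\s_{n-1}(x),-1)}h_y^{-1}\Big).
\end{equation*}

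Next I would equate this with $\m^{(n-1)}_{\h,\r}(\s_{n-1})$ taken from \eqref{measure1}; the common factor $\r^{H_{n-1}(\s_{n-1})}$ cancels, leaving
\begin{equation*}
\frac{1}{Z_{n,\r}^{(\h)}}\prod_{x\in W_{n-1}}\prod_{y\in S(x)}\Big(\r^{\l(\s_{n-1}(x),1)}h_y+\r^{\l(\s_{n-1}(x),-1)}h_y^{-1}\Big)=\frac{1}{Z_{n-1,\r}^{(\h)}}\prod_{x\in W_{n-1}}h_x^{\s_{n-1}(x)}.
\end{equation*}
The crucial step is then to eliminate the unknown partition functions. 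For a fixed boundary vertex $x\in W_{n-1}$ I would compare this identity for two configurations that agree everywhere except at $x$, where one takes $\s_{n-1}(x)=+1$ and the other $\s_{n-1}(x)=-1$. Since both sides factorize over $W_{n-1}$ and the spin at $x$ may be varied independently of the others, the ratio of the two identities cancels $Z_{n,\r}^{(\h)}$, $Z_{n-1,\r}^{(\h)}$ and every factor indexed by $x'\neq x$, yielding the single-site balance
\begin{equation*}
h_x^{-2}\prod_{y\in S(x)}\Big(\r^{\l(1,1)}h_y+\r^{\l(1,-1)}h_y^{-1}\Big)=\prod_{y\in S(x)}\Big(\r^{\l(-1,1)}h_y+\r^{\l(-1,-1)}h_y^{-1}\Big).
\end{equation*}
Multiplying the $y$-th factor of each product by $h_y$ and cancelling the resulting $h_y$ between numerator and denominator rearranges this precisely into \eqref{canonic3}.

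For the converse I would read the computation backwards. If \eqref{canonic3} holds at every $x\in V\setminus\{x^{(0)}\}$, then for each such $x$ the two products above coincide, so the quantity $c_x:=h_x^{-s}\prod_{y\in S(x)}\big(\r^{\l(s,1)}h_y+\r^{\l(s,-1)}h_y^{-1}\big)$ does not depend on the spin value $s=\s_{n-1}(x)\in\{-1,+1\}$. Consequently the sum $\sum_{\w}\m^{(n)}_{\h,\r}(\s_{n-1}\vee\w)$ computed in the first step equals $\big(\prod_{x\in W_{n-1}}c_x\big)\big/Z_{n,\r}^{(\h)}$ times $\r^{H_{n-1}(\s_{n-1})}\prod_{x\in W_{n-1}}h_x^{\s_{n-1}(x)}$, i.e. a constant multiple of $\m^{(n-1)}_{\h,\r}(\s_{n-1})$. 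Since $\sum_{\s_{n-1}}\sum_{\w}\m^{(n)}_{\h,\r}(\s_{n-1}\vee\w)=1$ and $\sum_{\s_{n-1}}\m^{(n-1)}_{\h,\r}(\s_{n-1})=1$, because $\m^{(n)}_{\h,\r}$ and $\m^{(n-1)}_{\h,\r}$ are probability measures, this constant must equal $1$, which is exactly the compatibility condition \eqref{comp}.

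I expect the main obstacle to be the bookkeeping in the elimination step rather than any conceptual difficulty. One must verify that the factorization over $W_{n-1}$ is genuine and that all the divisions involved---by $h_x$, by $h_y$, and by the denominators $\r^{\l(-1,1)}h_y^2+\r^{\l(-1,-1)}$ appearing in \eqref{canonic3}---are legitimate in $\bq_p$, i.e. that these $h$-values and denominators are nonzero. Unlike the real setting, there are no convergence concerns here, since every sum and product occurring is finite; the entire argument is purely algebraic over the field $\bq_p$.
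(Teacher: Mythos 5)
Your proof is correct and follows essentially the same route as the paper, which does not write out the argument but delegates it to the standard recursion computation of \cite{KM}: decompose $H_n$ over the edges joining $W_{n-1}$ to $W_n$, sum out the boundary spins so the sum factorizes over $S(x)$, compare single-spin flips at a vertex $x\in W_{n-1}$ to eliminate the partition functions (giving \eqref{canonic3}), and handle the converse by observing that the proportionality constant must be $1$ since both measures are normalized. Your closing caveats about the nonvanishing of $h_x$, $\r$, and the denominators in \eqref{canonic3} simply make explicit assumptions the paper leaves implicit, so there is no substantive difference in approach.
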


The proof can be proceeded by the same argument as in \cite{KM}.

According to Theorem \ref{compatibility} the problem of describing
the generalized $p$-adic quasi Gibbs measures is reduced to the
description of solutions of the functional equations
\eqref{canonic3}.

\section{Dynamical system and the existence of generalized $p$-Adic quasi Gibbs Measures}

In this section we consider the $\l$-model \eqref{ham1} over the
Cayley tree of order two, i.e. $k=2$. Main aim of this section is to
establish the existence of generalized $p$-adic quasi Gibbs measures
by analyzing the equation \eqref{canonic3}. In the sequel, we will
consider a case when $|\r|_p<1$ and $p\geq 3$. Note that the case
$\r\in\ce_p$ has been studied in \cite{KM,MD}.

Recall that a function $\h=\{\h_x\}_{x\in V\setminus\{x^0\}}$ is
called {\it translation-invariant} if $\h_{x}=\h_{y}$ for all
$x,y\in V$. A $p$-adic measure $\m_\h$, corresponding to a
translation-invariant function $\h$, is called a {\it
translation-invariant generalized $p$-adic quasi Gibbs measure}.

To solve the equation \eqref{canonic3}, in general, is very
complicated. Therefore,
 let us first restrict ourselves to
the description of translation-invariant solutions of
\eqref{canonic3}. More exactly, we suppose that $h_{x}:=h$ for all
$x\in V$. Then from \eqref{canonic3} we find
\begin{equation}\label{nnn}
h^2=\left(\frac{Ah^2+B} {Ch^2+D}\right)^2,
\end{equation}
where $A=\r^{\l(1,1)}$,$B=\r^{\l(1,-1)}$, $C=\r^{\l(-1,1)}$,
$D=\r^{\l(-1,-1)}$.

The last equation splits into the following ones:
\begin{equation}\label{nnn31}
h=\bigg(\frac{Ah^2+B} {Ch^2+D}\bigg),
\end{equation}

\begin{equation}\label{nnn32}
h=\bigg(\frac{Ah^2+B} {Ch^2+D}\bigg).
\end{equation}

One can see that \eqref{nnn32} is conjugate to \eqref{nnn31} via
$h(x)=-x$. Therefore, we will investigate the equation
\eqref{nnn31}.

In this paper, we restrict ourselves to a special case. Namely, we
assume that $|A|_p,|C|_p<1$, $B=D=1$, i.e. $\l(1,1),\l(-1,1)\in\bn$,
$\l(1,-1)=\l(-1,-1)=0$. In what follows, we will assume that
$|A|_p\neq|C|_p$, otherwise one finds $A=C$ and correspondingly
equation \eqref{nnn31} becomes trivial.

Let us denote
$$
S=\{x\in\bq_p:\ \ |x|_p=1\}.
$$

\begin{lem}\label{parti-g1}
Let $p\geq3$, and $|A|_p,|C|_p<1$ and $f$ be given by
\begin{equation}\label{g11}
f(x)=\frac{Ax^2+1}{Cx^2+1}.
\end{equation}
Then $f(S)\subset S$ and
$$|f(x)-f(y)|_{p}\leq|A-C||x-y|_{p},$$
 for all $x,y\in S$.
\end{lem}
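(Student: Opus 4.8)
The plan is to establish the two claims of Lemma~\ref{parti-g1} separately, both resting on the non-Archimedean property $|x+y|_p=|x|_p$ when $|x|_p>|y|_p$, recorded in the Preliminaries. For the inclusion $f(S)\subset S$, I would take $x\in S$, so $|x|_p=1$, and examine the numerator $Ax^2+1$ and denominator $Cx^2+1$ of $f(x)$. Since $|A|_p<1$ and $|x^2|_p=1$, we have $|Ax^2|_p=|A|_p<1=|1|_p$, so by the strong triangle equality $|Ax^2+1|_p=1$; the same argument with $|C|_p<1$ gives $|Cx^2+1|_p=1$. Hence $|f(x)|_p=|Ax^2+1|_p/|Cx^2+1|_p=1$, i.e. $f(x)\in S$. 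This part is routine and should go through cleanly.

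For the Lipschitz estimate, I would compute $f(x)-f(y)$ over a common denominator. Writing
\begin{equation}\label{diffplan}
f(x)-f(y)=\frac{(Ax^2+1)(Cy^2+1)-(Ay^2+1)(Cx^2+1)}{(Cx^2+1)(Cy^2+1)},
\end{equation}
I expect the numerator to simplify substantially. Expanding, the $AC x^2 y^2$ terms cancel, leaving $A(x^2-y^2)+Cy^2 Ax^2 \cdots$; more precisely the cross terms collapse so that the numerator factors as $(A-C)(x^2-y^2)=(A-C)(x-y)(x+y)$. Then I would bound the $p$-adic absolute value: the denominator has $|(Cx^2+1)(Cy^2+1)|_p=1$ by the computation already done, while $|x+y|_p\leq\max\{|x|_p,|y|_p\}=1$ and $|x-y|_p=|x-y|_p$. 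Collecting these,
\begin{equation}\label{boundplan}
|f(x)-f(y)|_p=|A-C|_p\,|x-y|_p\,|x+y|_p\leq|A-C|_p\,|x-y|_p,
\end{equation}
which is exactly the asserted inequality (reading the displayed $|A-C|$ as the $p$-adic norm).

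The step I expect to require the most care is the algebraic simplification of the numerator in \eqref{diffplan}: one must verify that the degree-four terms genuinely cancel and that what remains factors as $(A-C)(x-y)(x+y)$ rather than something with extra lower-order contributions, since $B=D=1$ is precisely what makes this clean factorization occur. The estimate $|x+y|_p\leq 1$ is immediate from the strong triangle inequality and the hypothesis $x,y\in S$, so no subtlety arises there. Once the factorization is confirmed, the remaining bounds are direct applications of multiplicativity of $|\cdot|_p$ together with the denominator computation reused from the first part, so the proof closes without further obstacles.
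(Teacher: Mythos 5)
Your proposal is correct and follows essentially the same route as the paper's proof: the inclusion $f(S)\subset S$ via $|Au^2+1|_p=|Cu^2+1|_p=1$, and the Lipschitz bound by writing $f(x)-f(y)$ over the common denominator $(Cx^2+1)(Cy^2+1)$, where the numerator collapses to $(A-C)(x^2-y^2)$ (the paper writes it as $C(y^2-x^2)+A(x^2-y^2)$) and $|x+y|_p\leq 1$ gives the stated estimate. Your factorization check is exactly right, and your reading of the displayed $|A-C|$ as $|A-C|_p$ matches the paper's intent.
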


\begin{proof}
Assume that $u\in S$. Then from
\begin{equation}\label{g111}
|Au^2+1|_p=|Cu^2+1|_p=1
\end{equation}
one gets $f(S)\subset S$. Now let us show the second condition. Let
$x,y\in S$, then from \eqref{g111} we have
\begin{eqnarray*}
|f(x)-f(y)|_{p}&=& \left|\frac{C(y^2-x^2)+A(x^2-y^2)}
{(Cx^{2}+1)(Cy^{2}+1)}\right|_{p}\\[2mm]
&\leq&|A-C|_p|x-y|_{p}.
\end{eqnarray*}
This completes the proof.
\end{proof}

Now we can formulate the following proposition about fixed points of
$f$.

\begin{thm}\label{g1-fix} Let $|A|_p,|C|_p<1$ with $|A|_p\neq |C|_p$, and $f$
be given by \eqref{g11}. Then the following statements hold:
\begin{enumerate}
\item[(i)] The function $f$ has a unique fixed point $x_{0}$ in
$\ce_{p}$;
\item[(ii)] Assume that $|A|_p^2<|C|_p$. Then the function $f$ has at most two fixed points
$x_{1},x_{2}$ different from $x_0$ if and only if $\sqrt{-C}$
exists; Moreover, one has
\begin{equation}\label{x12-1}
|x_{1,2}|_p=\frac{1}{\sqrt{|C|_p}};
\end{equation}

\item[(iii)] Assume that $|A|_p^2>|C|_p$. Then the function $f$ has two fixed points
$x_{1},x_{2}$ different from $x_0$. Moreover, one has
\begin{equation}\label{x12-2}
|x_{1}|_p=\frac{|A|_p}{|C|_p}, \ |x_{2}|_p=\frac{1}{|A|_p}.
\end{equation}
\end{enumerate}
\end{thm}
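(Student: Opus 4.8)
The plan is to analyze the fixed-point equation $f(x)=x$, which after clearing denominators becomes the cubic $Cx^3 - Ax^2 + x - 1 = 0$. Writing $g(x) = Cx^3 - Ax^2 + x - 1$, I would first locate the fixed point in $\ce_p$ and then count and estimate the remaining roots by the Newton-polygon / $p$-adic valuation method, since the three parts of the theorem are really three regimes for how the $p$-adic sizes of the coefficients force the root valuations.

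For part (i), I would look for a fixed point near $1$. Since $|A|_p,|C|_p<1$, at $x=1$ we have $g(1) = C - A + 1 - 1 = C - A$, so $|g(1)|_p = |A-C|_p < 1$, while $g'(1) = 3C - 2A + 1$ has $|g'(1)|_p = 1$ by the strong triangle inequality (the term $1$ dominates). Hence Hensel's lemma applies and produces a unique root $x_0$ with $|x_0 - 1|_p < 1$; one then checks this root actually lies in $\ce_p$ by verifying the sharper estimate $|x_0-1|_p < p^{-1/(p-1)}$, which follows because $|x_0-1|_p = |g(1)|_p/|g'(1)|_p = |A-C|_p$ and $|A-C|_p = \max\{|A|_p,|C|_p\} \le p^{-1} < p^{-1/(p-1)}$ for $p\ge 3$. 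Alternatively, and perhaps more cleanly, this is just the contraction statement of Lemma \ref{parti-g1}: since $|A-C|_p<1$, $f$ is a contraction on $S$, so it has a unique fixed point there, and one identifies it as lying in $\ce_p$.

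For parts (ii) and (iii) I would factor out the known root, writing $g(x) = C(x - x_0) q(x)$ with $q$ a monic-up-to-scaling quadratic, and study $q$; the remaining two fixed points are the roots $x_1,x_2$ of $q$. The cleaner route, though, is to use the relations between roots and coefficients of $Cx^3 - Ax^2 + x - 1$: the product of all three roots is $1/C$, so $|x_0 x_1 x_2|_p = 1/|C|_p$, and since $|x_0|_p=1$ we get $|x_1 x_2|_p = 1/|C|_p$; likewise the sum of the three roots is $A/C$ and the sum of pairwise products is $1/C$. The strategy is to feed the constraint $|A|_p^2 \lessgtr |C|_p$ into these symmetric relations to determine the individual valuations $|x_1|_p,|x_2|_p$. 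In regime (ii), $|A|_p^2<|C|_p$ forces the two extra roots to have equal valuation $1/\sqrt{|C|_p}$ (balanced Newton polygon), and their existence in $\bq_p$ is governed by whether the leading quadratic behavior, essentially $Cx^2+1=0$, is solvable, i.e. whether $\sqrt{-C}$ exists; Lemma \ref{quadrat} is the tool for deciding this. In regime (iii), $|A|_p^2>|C|_p$ makes the Newton polygon of $q$ have two distinct slopes, splitting the valuations as $|x_1|_p = |A|_p/|C|_p$ and $|x_2|_p = 1/|A|_p$, and this asymmetry also guarantees both roots are rational over $\bq_p$ (each slope segment has a root).

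The main obstacle will be parts (ii) and (iii): extracting the precise valuations and, especially in (ii), pinning the existence-of-roots criterion to the exact condition ``$\sqrt{-C}$ exists.'' The valuation estimates in \eqref{x12-1} and \eqref{x12-2} are exactly what a Newton-polygon computation on the quadratic factor $q$ should yield, so the careful bookkeeping lies in identifying the correct slopes from the coefficient valuations of $q$ after dividing out $(x-x_0)$. The delicate point is that in regime (ii) the balanced slope gives a single Newton segment of length two, so solvability over $\bq_p$ is not automatic and must be reduced to a quadratic-residue condition via Lemma \ref{quadrat}; I expect the leading balance to reduce to solving $Cx^2+1\equiv 0$, i.e.\ to the existence of $\sqrt{-C}$, with the lower-order terms being harmless perturbations that do not affect solvability. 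Verifying that these perturbations are indeed negligible (so that Hensel-type lifting from the approximate roots succeeds) is the step I would treat most carefully.
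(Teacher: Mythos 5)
Your proposal is correct, and for parts (ii) and (iii) it takes a genuinely different route from the paper. Both arguments begin the same way: the paper also reduces $f(x)=x$ to the cubic $Cx^{3}-Ax^{2}+x-1=0$, obtains $x_0$ for part (i) exactly as in your alternative route (the Banach contraction principle on $S$ via Lemma \ref{parti-g1}, then the computation $|x_0-1|_p=|A-C|_p<1$; your Hensel variant at $x=1$ works equally well and is in fact sharper about the bound $p^{-1/(p-1)}$), and factors out the root $x_0$ to get the quadratic $Cx^{2}+(Cx_{0}-A)x+1/x_{0}=0$. But from there the paper does \emph{not} use Newton polygons: it solves the quadratic explicitly, writing $x_{1,2}=\bigl(A-Cx_0\pm\sqrt{\Delta}\bigr)/(2C)$, and analyzes the discriminant $\Delta=(Cx_0-A)^2-4C/x_0$ by brute-force valuation bookkeeping. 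In regime (ii) it shows $\Delta=p^{\g_1}(-4\e_1+\tilde\e p^{\tilde\g})$, so by Lemma \ref{quadrat} $\sqrt{\Delta}$ exists iff $\sqrt{-4C}$, equivalently $\sqrt{-C}$, exists, and $|\sqrt{\Delta}|_p=\sqrt{|C|_p}$ yields \eqref{x12-1} via $|x_{1,2}-1|_p=1/\sqrt{|C|_p}$; in regime (iii) it shows $\Delta=A^2(1+\delta_1p^{\tilde\g_1})$, so $\sqrt{\Delta}=A(1+\delta_2p^{\tilde\g_2})$ always exists, giving $|x_1|_p=|A|_p/|C|_p$ directly and then $|x_2|_p=1/|A|_p$ from the Vieta relation $x_1x_2=1/(Cx_0)$ --- the one place where the paper, like you, invokes root--coefficient relations. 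Your Newton-polygon scheme reproduces all of this: in regime (ii) the points $(0,0),(1,v(Cx_0-A)),(2,\g_1)$ give a single balanced segment of slope $\g_1/2$ (since $2\g_2>\g_1$ forces the middle point above the chord), so $|x_{1,2}|_p=p^{\g_1/2}=1/\sqrt{|C|_p}$, with rationality over $\bq_p$ requiring $\g_1$ even plus solvability of the residual equation $\e_1y^2+1\equiv 0\ (\mathrm{mod}\ p)$ --- which, as you predicted, is exactly the existence of $\sqrt{-C}$ by Lemma \ref{quadrat}, the perturbations being killed by Hensel lifting since the residual roots are simple for $p\geq 3$; in regime (iii) the polygon breaks into two integer-slope segments of length one, giving both valuations in \eqref{x12-2} and automatic existence. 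What each approach buys: the paper's explicit quadratic formula produces concrete expressions (e.g., $Cx_{1,2}^2=(A-Cx_0)x_{1,2}-1/x_0$) that are reused verbatim in the proof of Proposition \ref{dyn-bh}, so it is not just a counting device; your route is more systematic, makes the valuation splitting in (iii) and the balanced-versus-broken dichotomy conceptually transparent, and would generalize to higher-degree factors where no root formula is available. The two delicate points you flagged --- the parity/residue reduction in (ii) and the Hensel lifting of the residual roots --- are precisely where the paper's discriminant computation does the corresponding work, and your treatment of them is sound.
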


\begin{proof} (i) By Lemma \ref{parti-g1} we conclude that $f$ satisfies the Banach
contraction principle on $S$. Therefore, there exists $x_0\in S$
such that $f(x_0)=x_0.$ Let us show that $x_0\in\ce_p$. Indeed, we
have
\begin{eqnarray}\label{g-x0}
|x_0-1|_p&=&|f(x_0)-1|_p=\bigg|\frac{Ax^2+1}{Cx^2+1}-1\bigg|_p\nonumber\\[2mm]
&=&|(A-C)x_0^2|_p\nonumber\\
&=&|A-C|_p<1
\end{eqnarray}
this means $x_0\in\ce_p$.\\

First note that the  equation $x=f(x)$ can be rewritten as follows
\[
Cx^{3}-Ax^{2}+x-1=0
\]

Note that, in general, we may solve the last equation by methods
developed in \cite{MOS,MOSM}. But those methods give only
information about the existence of solutions. In reality, we need
more properties of the solutions (see further sections). Therefore,
we are going to find all the solutions.

 Since $x_0$ is a solutions
of the last equation, therefore, one has

\begin{equation}\label{canonicU}
Cx^{3}-Ax^{2}+x-1=(x-x_{0})(Cx^{2}+(Cx_{0}-A)x+1/x_{0})).
\end{equation}

Let us solve
\begin{equation}\label{quadratic2}
Cx^{2}+(Cx_{0}-A)x+1/x_{0}=0.
\end{equation}

From (i) and the conditions of the proposition we can write
\begin{equation}\label{xAB}
\frac{1}{x_{0}}=1+\e_0p^{\g_0}, \ \ C=\e_1p^{\g_1}, \ \
A=\e_2p^{\g_2},
\end{equation}
where $\e_0,\e_2,\e_3\in\bz_p$ and $\g_0,\g_1,\g_1>0$.

(ii) From $|A|^2_p<|C|_p$ it follows that $2\g_2>\g_1$.

Hence, the discriminant of \eqref{quadratic2} can be calculated as
follows
\begin{eqnarray}\label{g-dis}
\Delta&=&(Cx_0-A)^2-\frac{4C}{x_0}\nonumber\\[2mm]
&=&p^{\g_1}\bigg(-\frac{4\e_1}{x_0}+\e_1^2x_0p^{\g_1}-2\e_1\e_2x_0p^{\g_2}+\e_2^2p^{2\g_2-\g_1}\bigg)
\nonumber\\[2mm]
&=&p^{\g_1}\bigg(-4\e_1-4\e_1\e_0p^{\g_0}+\e_1^2x_0p^{\g_1}-2\e_1\e_2x_0p^{\g_2}+\e_2^2p^{2\g_2-\g_1}\bigg)
\nonumber\\[2mm]
&=&p^{\g_1}(-4\e_1+\tilde\e p^{\tilde\gamma})
\end{eqnarray}
for some $\tilde\e\in\bz_p$ and $\tilde\g>0$.

Consequently, from Lemma \ref{quadrat} we conclude that
$\sqrt{\Delta}$ exists if and only if $\sqrt{-4C}$ exists, which is
equivalent the existence of $\sqrt{-C}$. So, it follows from
\eqref{g-dis} that $|\sqrt{\Delta}|_p=\sqrt{|C|_p}$.

Assume that \eqref{quadratic2} has two solutions $x_{1},x_{2}$,
which have the following form
\begin{eqnarray}\label{solx1}
x_{1,2}=\frac{A-Cx_{0}\pm\sqrt{\Delta}}{2C}.
\end{eqnarray}

Taking into account that
\begin{eqnarray}\label{ACC}
|A-C|_p\leq\max\{|A|_p,|C|_p\}<\sqrt{|C|_p},
\end{eqnarray}
 and
$|C(x_0+1)|_p=|C|_p$ with the strong triangle inequality from
 \eqref{solx1} we obtain
\begin{eqnarray*}\label{canonic61}
|x_{1,2}-1|_p&=&\frac{1}{|C|_p}|A-C-C(x_0+1)\pm\sqrt{\Delta}|_p\nonumber\\[2mm]
&=&\frac{1}{\sqrt{|C|_p}}.
\end{eqnarray*}

The last equality implies \eqref{x12-1}.

(iii) Now assume that $|A|_p^2>|C|_p$. This means that $2\g_2<\g_1$.
Then  from \eqref{g-dis} we find that
\begin{eqnarray}\label{g-dis}
\Delta=A^2(1+\delta_1 p^{\tilde\gamma_1})
\end{eqnarray}
for some $\delta_1\in\bz_p$ and $\tilde\g_1>0$. Hence, again from
Lemma \ref{quadrat} we infer that $\sqrt{\Delta}$ exists. Moreover,
one has $\sqrt{\Delta}=A(1+\delta_2 p^{\tilde\gamma_2})$, for some
$\delta_2\in\bz_p$ and $\tilde\g_2>0$. This yields that
$$
|A+\sqrt{\Delta}|_p=|A|_p, \ \ |A-\sqrt{\Delta}|_p<|A|_p.
$$

For the solutions $x_{1,2}$ (see \eqref{solx1})  from the last
equalities we obtain
\begin{eqnarray}\label{1x11}
|x_1|_p&=&\bigg|\frac{A+\sqrt{\Delta}-Cx_{0}}{2C}\bigg|_p=\frac{|A|_p}{|C|_p},
\end{eqnarray}
since $|A|_p>|A|_p^2>|C|_p$.

From the equality $x_1x_2=\frac{1}{Cx_0}$ with \eqref{1x11} one
finds
$$
|x_2|_p=\frac{1}{|A|_p}.
$$
This completes the proof.
\end{proof}

According to Theorem \ref{compatibility} the solutions $x_{0}$,
$x_{1}$ and $x_{2}$ (If they exist) generate generalized $p-$adic
quasi Gibbs measures $\mu_{0}$, $\mu_{1}$ and $\mu_{2}$,
respectively. Hence, we can formulate the following result.

\begin{thm}\label{exist-11}
Let $p\geq3$, $|\r|_p<1$. Assume that for the function $\l$ one has
\begin{equation}\label{lll}
 \l(1,1),\l(-1,1)>0,  \ \ \l(1,-1)=\l(-1,-1)=0.
\end{equation}
 Then for the $\l$-model \eqref{ham1} on the Cayley tree of order two the
following assertions hold:
\begin{enumerate}
\item[(i)] there exists a transition-invariant generalized $p$-adic quasi
Gibbs Measure $\mu_{0}$;

\item[(ii)] if
$$2\l(1,1)>\l(-1,1),
$$ then there are three
transition-invariant generalized $p$-adic quasi Gibbs measures
$\mu_{0}$, $\mu_{1}$ and $\mu_{2}$ if and only if
$\sqrt{-\r^{\l(-1,1)}}$ exists;

\item[(ii)] if
$$2\l(1,1)<\l(-1,1),$$ then there are three
transition-invariant generalized $p$-adic quasi Gibbs measures
$\mu_{0}$, $\mu_{1}$ and $\mu_{2}$.
\end{enumerate}
\end{thm}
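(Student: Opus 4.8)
The plan is to derive Theorem \ref{exist-11} directly from Theorems \ref{compatibility} and \ref{g1-fix} by translating the dynamical-system statements about fixed points into measure-theoretic statements. The key observation is the dictionary established earlier: under the ansatz $h_x := h$ for all $x$, the compatibility equation \eqref{canonic3} reduces to the translation-invariant fixed-point equation \eqref{nnn31}, namely $h = f(h)$ with $f(x) = (Ax^2+1)/(Cx^2+1)$, where $A = \r^{\l(1,1)}$, $B = \r^{\l(1,-1)} = 1$, $C = \r^{\l(-1,1)}$, $D = \r^{\l(-1,-1)} = 1$; the last two equal $1$ precisely because of the hypothesis \eqref{lll} that $\l(1,-1) = \l(-1,-1) = 0$. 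By Theorem \ref{compatibility}, each solution of $h = f(h)$ yields a translation-invariant function $\h$ for which the measures $\m^{(n)}_{\h,\r}$ satisfy the compatibility condition \eqref{comp}, and hence, by the $p$-adic Kolmogorov theorem cited in Section 3, produces a translation-invariant generalized $p$-adic quasi Gibbs measure.

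First I would verify that the hypotheses of Theorem \ref{g1-fix} are met. Since $|\r|_p < 1$ and $\l(1,1), \l(-1,1) > 0$ are positive integers, we have $|A|_p = |\r|_p^{\l(1,1)} < 1$ and $|C|_p = |\r|_p^{\l(-1,1)} < 1$, so both $A$ and $C$ lie in the open unit disc as required. The condition $|A|_p \neq |C|_p$ (assumed throughout Section 4 to avoid the trivial case $A = C$) corresponds to $\l(1,1) \neq \l(-1,1)$, which is implied by either of the strict inequalities $2\l(1,1) \gtrless \l(-1,1)$ appearing in parts (ii) and (iii). Part (i) of the theorem then follows immediately: Theorem \ref{g1-fix}(i) guarantees a unique fixed point $x_0 \in \ce_p$, and feeding this $x_0$ through Theorem \ref{compatibility} gives the measure $\m_0$.

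For the remaining two parts I would match up the arithmetic conditions. The hypothesis $2\l(1,1) > \l(-1,1)$ is equivalent to $|\r|_p^{2\l(1,1)} < |\r|_p^{\l(-1,1)}$, i.e.\ $|A|_p^2 < |C|_p$, which is exactly the regime of Theorem \ref{g1-fix}(ii); there the two additional fixed points $x_1, x_2$ exist if and only if $\sqrt{-C}$ exists, and since $C = \r^{\l(-1,1)}$, this is precisely the stated condition that $\sqrt{-\r^{\l(-1,1)}}$ exists. Symmetrically, $2\l(1,1) < \l(-1,1)$ translates to $|A|_p^2 > |C|_p$, the regime of Theorem \ref{g1-fix}(iii), where the two extra fixed points $x_1, x_2$ always exist with no further condition. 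In each case the fixed points $x_0, x_1, x_2$ are genuinely distinct—$x_0 \in \ce_p$ so $|x_0|_p = 1$, whereas \eqref{x12-1} and \eqref{x12-2} show $|x_{1,2}|_p \neq 1$ since $|C|_p, |A|_p < 1$—so they yield three distinct translation-invariant measures $\m_0, \m_1, \m_2$ via Theorem \ref{compatibility}.

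The argument is essentially a bookkeeping exercise once the correspondence is set up, so there is no deep obstacle; the only point requiring genuine care is confirming that distinct fixed points actually produce \emph{distinct} quasi Gibbs measures rather than coinciding after the Kolmogorov construction. The cleanest way to secure this is to note that the $p$-adic norms $|x_0|_p, |x_1|_p, |x_2|_p$ are pairwise different (one equals $1$, the others are governed by \eqref{x12-1} or \eqref{x12-2} and differ from $1$ and, in the regime of (iii), from each other since $|A|_p/|C|_p \neq 1/|A|_p$ when $|A|_p^2 \neq |C|_p$), whence the associated functions $\h$ differ and the measures they generate are distinct. This reduces the whole theorem to the already-proved Theorem \ref{g1-fix} together with the compatibility correspondence of Theorem \ref{compatibility}, with the two numbered cases being nothing more than the two sub-cases $|A|_p^2 \lessgtr |C|_p$ rephrased in terms of the exponents $\l(1,1)$ and $\l(-1,1)$.
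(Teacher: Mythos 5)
Your proposal follows essentially the same route as the paper: the paper itself gives no separate argument for Theorem \ref{exist-11} beyond the remark preceding it, namely that the fixed points $x_0,x_1,x_2$ of $f$ produced by Theorem \ref{g1-fix} generate the measures $\m_0,\m_1,\m_2$ via Theorem \ref{compatibility}, with the dictionary $|A|_p^2<|C|_p\Leftrightarrow 2\l(1,1)>\l(-1,1)$ and $|A|_p^2>|C|_p\Leftrightarrow 2\l(1,1)<\l(-1,1)$ coming from $|\r|_p<1$. Your added attention to distinctness of the resulting measures is a sensible supplement that the paper leaves implicit.

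Two small points need repair. First, your claim that the non-degeneracy condition $|A|_p\neq|C|_p$ ``is implied by either of the strict inequalities'' is false in case (ii): taking $\l(1,1)=\l(-1,1)=1$ gives $2\l(1,1)>\l(-1,1)$ yet $A=C$, in which case $f\equiv 1$ and equation \eqref{nnn31} degenerates, so no three solutions can arise. The implication does hold in case (iii), since $\l(1,1)>0$ gives $\l(1,1)\leq 2\l(1,1)<\l(-1,1)$. For case (ii) you must instead carry $|A|_p\neq|C|_p$ as the standing assumption the paper imposes after \eqref{nnn32}; it cannot be derived from the hypotheses of the theorem as stated. Second, your norm-based separation of the fixed points does not distinguish $x_1$ from $x_2$ in regime (ii), where \eqref{x12-1} gives them the \emph{same} norm $1/\sqrt{|C|_p}$; there you should argue instead that $x_1\neq x_2$ because they are the two roots \eqref{solx1} of the quadratic \eqref{quadratic2}, which differ by $\sqrt{\Delta}/C\neq 0$ since $|\sqrt{\Delta}|_p=\sqrt{|C|_p}>0$. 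With these two fixes your argument is complete and coincides with the paper's intended proof.
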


In this paper, our main aim to establish the existence of phase
transitions for the model. In \cite{M13} we have proposed to predict
the phase transitions by looking at behavior of the function $f$.
Now we are going to determine behaviors of the fixed points of the
function.

\begin{prop}\label{dyn-bh}
Let $|A|_p,|C|_p<1$ with $|A|_p\neq |C|_p$, and $f$ be given by
\eqref{g11}. Then the following statements hold:
\begin{enumerate}
\item[(i)] The fixed point $x_0$ is attractive;\\
\item[(ii)] Assume that $|A|_p^2<|C|_p$ and $\sqrt{-C}$ exists.
Then the fixed points $x_{1,2}$ are repelling;

\item[(iii)] Assume that $|A|_p^2>|C|_p$. Then the fixed point
$x_{1}$ is attractive and $x_2$ is neutral.
\end{enumerate}
\end{prop}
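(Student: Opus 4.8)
The plan is to classify each fixed point by computing the multiplier $\l=f'(x^{*})$ and reading off $|\l|_p$, since, by the terminology recalled in Section 2, $x^{*}$ is attractive, neutral, or repelling according as $|\l|_p<1$, $|\l|_p=1$, or $|\l|_p>1$. First I would differentiate $f$ by the quotient rule; the $ACx^3$ contributions in the numerator cancel, leaving
\begin{equation*}
f'(x)=\frac{2(A-C)x}{(Cx^2+1)^2}.
\end{equation*}

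The key simplification, which I would carry out next, is to rewrite this multiplier at a fixed point. At any $x^{*}$ with $f(x^{*})=x^{*}$ one has $x^{*}(C(x^{*})^{2}+1)=A(x^{*})^{2}+1$, and hence $(A-C)(x^{*})^{2}=(A(x^{*})^{2}+1)-(C(x^{*})^{2}+1)=(C(x^{*})^{2}+1)(x^{*}-1)$. Substituting this into the derivative gives the fixed-point identity
\begin{equation*}
f'(x^{*})=\frac{2(x^{*}-1)}{A(x^{*})^{2}+1},
\end{equation*}
valid at $x_0,x_1,x_2$ simultaneously. Its advantage over the raw formula is that $|A(x^{*})^{2}|_p$ never equals $1$ at any of these fixed points: its values are $|A|_p$, $|A|_p/|C|_p$, $|A|_p^3/|C|_p^2$, $1/|A|_p$ at $x_0$, at $x_{1,2}$ of case (ii), and at $x_1,x_2$ of case (iii) respectively, and $|A|_p<1$, the standing hypothesis $|A|_p\neq|C|_p$, and the relevant case assumption exclude the value $1$ in each instance. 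Consequently the strong triangle inequality evaluates $|A(x^{*})^{2}+1|_p=\max\{|A(x^{*})^{2}|_p,1\}$ with no cancellation, whereas the original denominator $(C(x^{*})^{2}+1)^2$ sits near the pole of $f$ and is awkward to estimate directly.

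With this in hand I would substitute the norm data from Theorem \ref{g1-fix}, using $|2|_p=1$ (here $p\geq3$ is essential). For (i), $x_0\in\ce_p\subset S$, so $|x_0-1|_p=|A-C|_p$ by \eqref{g-x0} while $|Ax_0^2+1|_p=1$, whence $|\l|_p=|A-C|_p=\max\{|A|_p,|C|_p\}<1$ and $x_0$ is attractive. For the remaining roots $|x_{1,2}|_p>1$, so $|x_{1,2}-1|_p=|x_{1,2}|_p$ and I plug in \eqref{x12-1} or \eqref{x12-2}. In case (ii) this gives $|\l|_p=|C|_p^{-1/2}/\max\{|A|_p/|C|_p,\,1\}$, which exceeds $1$ whether $|A|_p\leq|C|_p$ (because $|C|_p<1$) or $|A|_p>|C|_p$ (because $|A|_p^2<|C|_p$ forces $|C|_p^{1/2}>|A|_p$), so $x_{1,2}$ are repelling. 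In case (iii), where $|A|_p^2>|C|_p$ already forces $|A|_p>|C|_p$, the same substitution yields $|\l|_p=|C|_p/|A|_p^2<1$ at $x_1$ (attractive) and $|\l|_p=1$ at $x_2$ (neutral).

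The main obstacle I anticipate is the case analysis hidden in (ii): the hypothesis $|A|_p^2<|C|_p$ does not by itself determine whether $|A|_p$ or $|C|_p$ is the larger, so both orderings must be examined and the repelling conclusion confirmed in each — this is exactly where the factor $\max\{|A|_p/|C|_p,1\}$ branches. The only other points demanding care are the verifications that $|A(x^{*})^{2}|_p\neq1$ at every fixed point, which justify the clean use of the strong triangle inequality; beyond these, the argument is a direct application of the fixed-point identity.
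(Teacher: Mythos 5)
Your proposal is correct, and it departs from the paper's proof at precisely the step where the real work lies. The paper uses the same skeleton — compute $f'(x)=\frac{2(A-C)x}{(Cx^2+1)^2}$ and read off $|f'(x^*)|_p$ from the norm data of Theorem \ref{g1-fix} — but it evaluates the denominator directly: at $x_{1,2}$ it substitutes the quadratic relation $Cx_{1,2}^{2}=(A-Cx_{0})x_{1,2}-1/x_{0}$ and, in case (ii), runs a three-term ultrametric dominance argument (showing $|(A-C)x_{1,2}|_p$ strictly exceeds both $|C(x_0-1)x_{1,2}|_p$ and $|1-x_0|_p$) to get $|Cx_{1,2}^2+1|_p=|A-C|_p/\sqrt{|C|_p}$, whence $|f'(x_{1,2})|_p=\sqrt{|C|_p}/|A-C|_p>1$ via the single inequality \eqref{ACC}. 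Your identity $f'(x^*)=2(x^*-1)/(A(x^*)^2+1)$ (legitimate since no fixed point is zero: $x^*=0$ would force $0=1$ in $x^*(C(x^*)^2+1)=A(x^*)^2+1$) eliminates that cancellation analysis entirely, replacing it with the cheap checks that $|A(x^*)^2|_p\neq 1$ and that $|x^*-1|_p$ equals $|A-C|_p$ at $x_0$ (by \eqref{g-x0}) or $|x^*|_p$ when $|x^*|_p>1$. The outputs agree: your $\max\{|A|_p/|C|_p,1\}$ is the paper's $|A-C|_p/|C|_p$, and your explicit branch on whether $|A|_p$ or $|C|_p$ dominates in (ii) is the same dichotomy the paper compresses into \eqref{ACC}. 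What your route buys is a denominator with no near-cancellation (the paper's $(Cx^2+1)^2$ sits at the pole of $f$, which is why it needs the delicate estimate); the price is the four verifications $|A(x^*)^2|_p\neq1$, of which the only one you leave implicit — $|A|_p^3\neq|C|_p^2$ in case (iii) — does hold, since $|C|_p^2<|A|_p^4<|A|_p^3$ by $|A|_p<1$. Both arguments use $|2|_p=1$, i.e.\ $p\geq3$, which the paper assumes throughout the relevant section even though it is not restated in the proposition.
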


\begin{proof} From \eqref{g11} we find that
\begin{equation}\label{der}
f'(x)=\frac{2(A-C)x}{(Cx^2+1)^2}.
\end{equation}

(i). Since $x_0\in\ce_p$, from \eqref{der} we get
$|f'(x_0)|_p=|A-C|_p<1$, which means that $x_0$ is attractive.

(ii). Assume that $|A|_p^2<|C|_p$ and $\sqrt{-C}$ exists. Then from
Theorem \ref{g1-fix} we conclude that the fixed points $x_{1,2}$
exist and satisfy the following equality
$$
Cx^{2}_{1,2}=(A-Cx_{0})x_{1,2}-\frac{1}{x_{0}}.
$$
Therefore, we have
\begin{eqnarray}\label{1x111}
|Cx_{1,2}^2+1|_p&=&\bigg|(A-Cx_{0})x_{1,2}-\frac{1}{x_{0}}+1\bigg|_p\nonumber\\[2mm]
&=&\bigg|(A-C)x_{1,2}-C(x_0-1)x_{1,2}+\frac{1-x_0}{x_0}\bigg|_p.
\end{eqnarray}

From \eqref{x12-1} and \eqref{g-x0} it follows that
$$
|(A-C)x_{1,2}|_p=\frac{|A-C|_p}{\sqrt{|C|_p}}, \ \
|C(x_0-1)x_{1,2}|_p=\sqrt{|C|_p}|A-C|_p, \ \ |1-x_0|_p=|A-C|_p.
$$
Hence,
\begin{equation*}\label{x120}
|(A-C)x_{1,2}|_p>|1-x_0|_p>|C(x_0-1)x_{1,2}|.
\end{equation*}
So, the last inequalities together with the strong triangle
inequality imply that \eqref{1x111} can be calculated as follows
\begin{eqnarray}\label{2x111}
|Cx_{1,2}^2+1|_p=\frac{|A-C|_p}{\sqrt{|C|_p}}.
\end{eqnarray}

Now from \eqref{der} with \eqref{2x111},\eqref{x12-1} one gets
\begin{eqnarray}\label{der1}
|f'(x_{1,2})|_p=\frac{|A-C|_p|x_{1,2}|_p}{|Cx_{1,2}^2+1|^2_p}=
\frac{\sqrt{|C|_p}}{|A-C|_p}>1
\end{eqnarray}
this implies that $x_{1,2}$ is repelling.

(iii). Assume that $|A|_p^2>|C|_p$, then the fixed points $x_{1,2}$
exist. Then from \eqref{x12-2} we immediately find
\begin{eqnarray}\label{3x111}
|Cx_{1}^2+1|_p=\frac{|A|_p^2}{|C|_p}, \ \ \
|Cx_{2}^2+1|_p=1.
\end{eqnarray}

Therefore, from \eqref{der}, \eqref{3x111} one gets
\begin{eqnarray}\label{der2}
|f'(x_{1})|_p=\frac{|C|_p}{|A|^2_p}<1, \ \ \ |f'(x_{2})|_p=1.
\end{eqnarray}
This means that $x_1$ is attractive and $x_2$ is neutral. The proof
is complete.
\end{proof}

\section{Phase Transitions}

In this section, we are going to establish the existence of the
phase transition for $\l$-models in the considered two regimes.

According to dynamical approach, taking into account Proposition
\ref{dyn-bh} we may predict that if $|A|_p^2<|C|_p$, then there
occurs a phase transition, and if $|A|_p^2>|C|_p$, then there exists
a quasi phase transition. In this section, we will confirm that our
predictions are true.

Before, going to prove main results we need some auxiliary facts.

\begin{lem}\label{AZ}\cite{MD} Ler $\r\in\bq_p$ and $\h$ be a solution of \eqref{canonic3}, and
$\m_{\h,\r}$ be an associated generalized $p$-adic quasi Gibbs
measure. Then for the corresponding partition function
$Z^{(\h)}_{n,\r}$ (see \eqref{partition}) the following equality
holds
\begin{equation}\label{ZN2}
|Z^{(\h)}_{n+1,\r}|_p=|A_{\h,n}|_p|Z^{(\h)}_{n,\r}|_p,
\end{equation}
where
\begin{equation}\label{aN3}
|A_{\h,n}|_p=\prod_{x\in W_n}|a(x)|_p,
\end{equation}
here
\begin{equation}\label{aN1}
|a(x)|_p^2=\bigg|\prod_{y\in S(x)}\sum_{\eta(y)\in\{-1,1\}}
\r^{\l(1,\eta(y))}(h_y)^{\eta(y)}\bigg|_p \bigg|\prod_{y\in
S(x)}\sum_{\eta(y)\in\{-1,1\}}
\r^{\l(-1,\eta(y))}(h_y)^{\eta(y)}\bigg|_p
\end{equation}
\end{lem}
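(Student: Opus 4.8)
The plan is to compute $Z^{(\h)}_{n+1,\r}$ by peeling off the outermost shell $W_{n+1}$ and re-expressing the result in terms of $Z^{(\h)}_{n,\r}$, using the recurrence \eqref{canonic3} to strip the boundary contribution of its dependence on the inner configuration. First I would invoke the decomposition $\Om_{V_{n+1}}=\Om_{V_n}\times\Om_{W_{n+1}}$ and write each $\s_{n+1}$ as a concatenation $\s_n\vee\w$ with $\s_n\in\Om_{V_n}$, $\w\in\Om_{W_{n+1}}$. Since the edges of $L_{n+1}$ not already in $L_n$ are exactly the edges $\langle x,y\rangle$ with $x\in W_n$ and $y\in S(x)$, the Hamiltonian splits as
\[
H_{n+1}(\s_n\vee\w)=H_n(\s_n)+\sum_{x\in W_n}\sum_{y\in S(x)}\l(\s_n(x),\w(y)).
\]
Substituting into \eqref{partition} and factoring the sum over $\w$ (each $y\in W_{n+1}$ lies in a single $S(x)$, so $W_{n+1}=\bigsqcup_{x\in W_n}S(x)$) gives
\[
Z^{(\h)}_{n+1,\r}=\sum_{\s_n\in\Om_{V_n}}\r^{H_n(\s_n)}\prod_{x\in W_n}\prod_{y\in S(x)}\sum_{\eta\in\{-1,1\}}\r^{\l(\s_n(x),\eta)}(h_y)^{\eta}.
\]

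Next I would set, for $x\in W_n$ and $i\in\{-1,1\}$,
\[
F_i(x)=\prod_{y\in S(x)}\sum_{\eta\in\{-1,1\}}\r^{\l(i,\eta)}(h_y)^{\eta},
\]
so the inner product equals $\prod_{x\in W_n}F_{\s_n(x)}(x)$ and, by \eqref{aN1}, $|a(x)|_p^2=|F_1(x)|_p|F_{-1}(x)|_p$. The key step is to rewrite \eqref{canonic3}: dividing numerator and denominator of each factor by $h_y$ turns it into $h_x^2=F_1(x)/F_{-1}(x)$. Defining $a(x):=F_1(x)/h_x$, this identity yields simultaneously $F_1(x)=a(x)h_x$ and $F_{-1}(x)=a(x)h_x^{-1}$, i.e. $F_{\s_n(x)}(x)=a(x)(h_x)^{\s_n(x)}$ for both spin values; note that no $p$-adic square root is required, since $a(x)$ is defined as an honest quotient, while $a(x)^2=F_1(x)F_{-1}(x)$ recovers \eqref{aN1}. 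Pulling the $\s_n$-independent factor $\prod_{x\in W_n}a(x)=:A_{\h,n}$ out of the sum leaves exactly $\sum_{\s_n}\r^{H_n(\s_n)}\prod_{x\in W_n}(h_x)^{\s_n(x)}=Z^{(\h)}_{n,\r}$, whence $Z^{(\h)}_{n+1,\r}=A_{\h,n}Z^{(\h)}_{n,\r}$; applying $|\cdot|_p$ then gives \eqref{ZN2} together with \eqref{aN3}.

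The main obstacle is bookkeeping rather than depth: one must correctly identify which edges are created in passing from $L_n$ to $L_{n+1}$, verify that the boundary sum genuinely factorizes over the successor sets $S(x)$, and then apply \eqref{canonic3} in precisely the form $F_1(x)/F_{-1}(x)=h_x^2$ so that the inner-configuration dependence of the boundary weights collapses to the clean factor $a(x)(h_x)^{\s_n(x)}$. Once that algebra is in place, the recurrence and hence its absolute-value version follow immediately.
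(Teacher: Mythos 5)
Your proof is correct: the shell decomposition $H_{n+1}(\s_n\vee\w)=H_n(\s_n)+\sum_{x\in W_n}\sum_{y\in S(x)}\l(\s_n(x),\w(y))$, the factorization of the boundary sum using $W_{n+1}=\bigsqcup_{x\in W_n}S(x)$, and the use of \eqref{canonic3} in the form $F_1(x)/F_{-1}(x)=h_x^2$ so that $F_{\s_n(x)}(x)=a(x)(h_x)^{\s_n(x)}$ for both spin values is exactly the computation behind this lemma, which the paper itself does not prove but imports from \cite{MD} (the same algebra also underlies the compatibility condition of Theorem \ref{compatibility}). Your choice $a(x):=F_1(x)/h_x$ neatly sidesteps any $p$-adic square root --- consistent with the lemma being stated only for $|a(x)|_p^2$ --- and the one implicit hypothesis you rely on, $h_x\neq 0$, is already forced by the presence of $(h_x)^{\s(x)}$ with $\s(x)=-1$ in \eqref{measure1}, so nothing is missing.
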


From this lemma we immediately find the following

\begin{lem}\label{Znn}
Let $\h=\{h_x\}$ be a translation-invariant solution of
\eqref{canonic3}, i.e. $h_x=h_*$ for all $x\in V$. Then one has
\begin{equation}\label{ZNN}
Z^{(\h)}_{n,\r}=\frac{1}{|h_*|_p^{|V_{n-1}|}}\bigg|\r^{\l(-1,1)}h^2_*+\r^{\l(-1,-1)}\bigg|_p^{2|V_{n-1}|}.
\end{equation}
\end{lem}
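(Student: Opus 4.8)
The plan is to read \eqref{ZNN} off the recursion \eqref{ZN2} of Lemma \ref{AZ} by evaluating the local factor \eqref{aN1} in the translation-invariant regime $h_x\equiv h_*$ and then telescoping the product. First I would substitute $h_y=h_*$ into \eqref{aN1}. On the Cayley tree of order $k=2$ each $x$ has exactly two successors, so both products over $y\in S(x)$ become squares of a single sum and \eqref{aN1} reduces to
\[
|a(x)|_p^{2}=|P|_p^{2}\,|Q|_p^{2},\qquad P=\r^{\l(1,1)}h_*+\r^{\l(1,-1)}h_*^{-1},\quad Q=\r^{\l(-1,1)}h_*+\r^{\l(-1,-1)}h_*^{-1}.
\]
In particular $|a(x)|_p$ is the same at every vertex, which is what makes the product \eqref{aN3} collapse to a pure power.

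The decisive step is to simplify $P$ and $Q$ by means of the fixed-point equation. A translation-invariant solution of \eqref{canonic3} satisfies \eqref{nnn31}, i.e. $h_*\big(\r^{\l(-1,1)}h_*^{2}+\r^{\l(-1,-1)}\big)=\r^{\l(1,1)}h_*^{2}+\r^{\l(1,-1)}$. Setting $M=\r^{\l(-1,1)}h_*^{2}+\r^{\l(-1,-1)}$ and dividing this identity by $h_*$ gives $P=M$, while directly $Q=h_*^{-1}M$. Hence $|a(x)|_p=|M|_p^{2}/|h_*|_p$, and by \eqref{aN3} together with $|W_n|=2^{n}$ from \eqref{WVn1} we get $|A_{\h,n}|_p=\big(|M|_p^{2}/|h_*|_p\big)^{|W_n|}$.

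It then remains to iterate \eqref{ZN2} and multiply, so that the exponents accumulate via $\sum_m|W_m|=|V_{n-1}|$ (again \eqref{WVn1}); this is exactly the mechanism that produces the volume exponent $|V_{n-1}|$ in \eqref{ZNN}. The telescoping is anchored by a base case obtained directly from \eqref{partition}: summing over the root spin yields a boundary factor $|h_*+h_*^{-1}|_p$, and because $p\geq3$ and the relevant fixed point lies in $\ce_p$---so that $|h_*|_p=1$ and $|h_*^{2}+1|_p=|2|_p=1$---this factor equals $1$ and does not spoil the clean exponent in \eqref{ZNN}.

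The main obstacle is the simplification in the second paragraph: one has to recognize that \eqref{nnn31} forces $P=M$ (equivalently, by the strong triangle inequality, $|P|_p=|M|_p$ and $|Q|_p=|h_*|_p^{-1}|M|_p$), since this is precisely what expresses $|a(x)|_p$ through $|M|_p$ and $|h_*|_p$ alone. Everything else---computing the $p$-adic norms of $P$ and $Q$, telescoping $\prod_m|A_{\h,m}|_p$, and verifying that the root contribution is trivial for $p\geq3$---is routine bookkeeping, and I expect no difficulty once $|a(x)|_p$ is in closed form.
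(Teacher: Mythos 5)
Your overall route---specializing \eqref{aN1} to $h_y\equiv h_*$, using the fixed-point identity \eqref{nnn31} to obtain $P=M$ and $Q=M/h_*$ with $M=\r^{\l(-1,1)}h_*^{2}+\r^{\l(-1,-1)}$, hence $|a(x)|_p=|M|_p^{2}/|h_*|_p$, and then telescoping \eqref{ZN2} via $\sum_{m=0}^{n-1}|W_m|=|V_{n-1}|$---is exactly the argument the paper leaves implicit (it states the lemma follows ``immediately'' from Lemma \ref{AZ}), and those computations are correct. (Strictly, the left-hand side of \eqref{ZNN} should read $|Z^{(\h)}_{n,\r}|_p$, as you tacitly assumed.)

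The genuine gap is in your anchoring of the telescoping. You dispose of the root factor $|h_*+h_*^{-1}|_p=|h_*^{2}+1|_p/|h_*|_p$ by asserting that ``the relevant fixed point lies in $\ce_p$''. That is true only for $x_0$ (Theorem \ref{g1-fix}(i)), whereas the lemma is stated for an \emph{arbitrary} translation-invariant solution of \eqref{canonic3} and is applied in the paper precisely to the other fixed points: in the regime $|A|_p^{2}<|C|_p$ one has $|x_{1,2}|_p=|C|_p^{-1/2}>1$ by \eqref{x12-1}, and in the regime $|A|_p^{2}>|C|_p$ one has $|x_1|_p=|A|_p/|C|_p>1$ and $|x_2|_p=1/|A|_p>1$ by \eqref{x12-2}. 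For any $h_*$ with $|h_*|_p>1$ the strong triangle inequality gives $|h_*^{2}+1|_p=|h_*|_p^{2}$, so your boundary factor equals $|h_*|_p\neq 1$ and does not cancel. Carried out honestly, your telescoping yields
\begin{equation*}
|Z^{(\h)}_{n,\r}|_p=\frac{|h_*^{2}+1|_p}{|h_*|_p}\cdot\bigg(\frac{|M|_p^{2}}{|h_*|_p}\bigg)^{|V_{n-1}|},
\end{equation*}
i.e.\ \eqref{ZNN} times an $n$-independent constant which equals $1$ exactly when $|h_*^{2}+1|_p=|h_*|_p$ (as for $x_0\in\ce_p$ with $p\geq 3$), but equals $|h_*|_p\neq 1$ for $x_1,x_2$. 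So your proof establishes \eqref{ZNN} only for $x_0$, and is off by that constant in the very cases the lemma serves in Theorems \ref{SPT} and \ref{QPT}; this incidentally shows \eqref{ZNN} as printed is itself only exact up to this constant. Since the factor does not depend on $n$, it is harmless for the boundedness/unboundedness conclusions downstream, but your claim that the root contribution ``equals $1$'' is false precisely in those regimes and should be replaced by carrying the constant $|h_*^{2}+1|_p/|h_*|_p$ through (or by restating the lemma up to such a constant).
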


\subsection{Regime $|A|_p^2<|C|_p$}

In this subsection our main result is the following result.

\begin{thm}\label{SPT}
Let $p\geq3$, $|\r|_p<1$. Assume that for the function $\l$ one has
\begin{equation*}\label{lll}
 2\l(1,1)>\l(-1,1),  \ \ \l(1,-1)=\l(-1,-1)=0.
\end{equation*}
and $\sqrt{-\r^{\l(-1,1)}}$ exists.
 Then there exist the phase transition for the
$\l$-model \eqref{ham1} on the Cayley tree of order two.
\end{thm}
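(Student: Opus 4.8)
The plan is to exhibit two distinct translation-invariant generalized $p$-adic quasi Gibbs measures, one bounded and one unbounded, which by the definition given in Section 3 is exactly a phase transition. First I would translate the hypotheses into the language of Section 4: since $|\r|_p<1$, the assumption $2\l(1,1)>\l(-1,1)$ is precisely $|A|_p^2<|C|_p$, and $\sqrt{-\r^{\l(-1,1)}}=\sqrt{-C}$ exists by hypothesis. Hence Theorem \ref{g1-fix}(ii) and Theorem \ref{exist-11} supply three distinct fixed points $x_0,x_1,x_2$ of $f$, with $|x_0|_p=1$, $x_0\in\ce_p$, and $|x_{1,2}|_p=1/\sqrt{|C|_p}$, together with the associated measures $\m_0,\m_1,\m_2$. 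These are genuinely distinct because $|C|_p<1$ forces $|x_0|_p=1\neq 1/\sqrt{|C|_p}=|x_{1,2}|_p$.

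The core of the argument is to measure the $p$-adic size of the partition functions through Lemma \ref{Znn}, which (using $\r^{\l(-1,1)}=C$ and $\r^{\l(-1,-1)}=1$) reads $|Z^{(\h)}_{n,\r}|_p=|h_*|_p^{-|V_{n-1}|}\,|Ch_*^2+1|_p^{2|V_{n-1}|}$. For $h_*=x_0$, from $|x_0|_p=1$ and $|C|_p<1$ one gets $|Cx_0^2+1|_p=1$, so $|Z^{(x_0)}_{n,\r}|_p=1$ for every $n$. For $h_*=x_1$ (and likewise $x_2$) I would insert $|x_{1,2}|_p=1/\sqrt{|C|_p}$ from \eqref{x12-1} and $|Cx_{1,2}^2+1|_p=|A-C|_p/\sqrt{|C|_p}$ from \eqref{2x111}, obtaining $|Z^{(x_1)}_{n,\r}|_p=\big(|A-C|_p^2/\sqrt{|C|_p}\big)^{|V_{n-1}|}$, which tends to $0$ because \eqref{ACC} gives $|A-C|_p<\sqrt{|C|_p}$.

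Boundedness of $\m_0$ is then immediate: since $|x_0|_p=1$ and $|Z^{(x_0)}_{n,\r}|_p=1$, every cylinder satisfies $|\m_0^{(n)}(\s_n)|_p=|\r|_p^{H_n(\s_n)}\le 1$, and the strong triangle inequality propagates this bound to all of $\cb$. For the unboundedness of $\m_1$ I would evaluate the measure on the constant configuration: set $A_n=\{\s:\s|_{V_n}\equiv -1\}$. On this configuration every edge contributes $\l(-1,-1)=0$, so $H_n=0$ and the Boltzmann weight has norm $1$, while the boundary product contributes $|x_1|_p^{-|W_n|}=(\sqrt{|C|_p})^{|W_n|}$. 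Writing $q=\sqrt{|C|_p}$ and $r=|A-C|_p^2/\sqrt{|C|_p}$, this yields $|\m_1(A_n)|_p=q^{|W_n|}/r^{|V_{n-1}|}$, and substituting $|W_n|=2^n$, $|V_{n-1}|=2^n-1$ gives $|\m_1(A_n)|_p=r\,(q/r)^{2^n}$.

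The main obstacle, which is really the heart of the regime, is the comparison $q/r>1$. Here \eqref{ACC} is decisive: $|A-C|_p<\sqrt{|C|_p}=q$ implies $r=|A-C|_p^2/q<q$, hence $q/r>1$, and therefore $|\m_1(A_n)|_p=r\,(q/r)^{2^n}\to\infty$. Thus $\m_1$ is unbounded while $\m_0$ is bounded, and since the two are distinct a phase transition occurs. I would close by noting that the estimate for $\m_2$ is identical, and that the whole mechanism rests on the fact that the regime condition $|A|_p^2<|C|_p$ is exactly what makes the per-vertex growth $q$ of the boundary weight outrun the per-vertex decay $r$ of the partition function.
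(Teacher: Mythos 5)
Your proposal is correct and follows essentially the same route as the paper: existence and distinctness of $\m_0,\m_1,\m_2$ via Theorem \ref{exist-11}(ii), the norm of the partition function from Lemma \ref{Znn}, boundedness of $\m_0$ from $x_0\in\ce_p$, and unboundedness of $\m_{1,2}$ by evaluating on the all-minus configuration and invoking $|A-C|_p<\sqrt{|C|_p}$ from \eqref{ACC}. Your quantity $r\,(q/r)^{2^n}$ coincides with the paper's expression \eqref{4mmm} after substituting $|W_n|=|V_{n-1}|+1$, so the two arguments agree up to notation.
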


\begin{proof} First we note that due to Theorem
\ref{exist-11} (ii) there are three translation-invariant
generalized $p$-adic Gibbs measures $\mu_0,\mu_1,\mu_2$.

Assume that $\h=\{h_x\}$ is a translation-invariant solution of
\eqref{canonic3}. Then $h_x=h_*$ for all $x\in V$, where $h_*$ is a
fixed point of $f$. Then due to Lemma \ref{Znn} from
\eqref{measure1} together with \eqref{ZNN} we obtain
\begin{eqnarray}\label{1mmm}
\left|\mu_{n,\r,*}(\sigma)\right|_{p}=\frac{|\r|_p^{H_n(\s)}|h_*|_p^{\sum_{x\in
W_n}\s(x)}|h_*|_p^{|V_{n-1}|}}{{\big|Ch^2_*+1\big|_p^{2|V_{n-1}|}}}.
\end{eqnarray}

Let us consider the measure $\m_0$. Since $x_0\in\ce_p$ (see
Proposition \ref{g1-fix}) and $|C|_p<1$, from \eqref{1mmm} one gets
\begin{eqnarray}\label{2mmm}
\left|\mu_{n,0}(\sigma)\right|_{p}=|\r|_p^{H_n(\s)}<1
\end{eqnarray}
This means that $\mu_{0}$ is bounded.\\

Now consider the measure $\m_{1,2}$ From \eqref{1mmm} together with
\eqref{2x111},\eqref{x12-1} one finds
\begin{eqnarray}\label{3mmm}
|\mu_{n,\r,1,2}(\sigma)|_{p}=\frac{|\r|_p^{H_n(\s)}|\sqrt{|C|_p}^{-\sum_{x\in
W_n}\s(x)}|\sqrt{|C|_p}^{|V_{n-1}|}}{|A-C|_p^{2|V_{n-1}|}}
\end{eqnarray}

Define a configuration $\s^{(-)}$ on $V_n$ by
$$
\s^{(-)}(x)=-1, \ \ \forall x\in V_n.
$$

Then one can see that $H_n(\s^{(-)})=0$.

Hence, from \eqref{3mmm} together with \eqref{WVn2},\eqref{ACC} we
have
\begin{eqnarray}\label{4mmm}
\left|\mu_{n,\r,1,2}(\s^{(-)})\right|_{p}&=&
\frac{\sqrt{|C|_p}^{|W_n|}\sqrt{|C|_p}^{|V_{n-1}|}}{|A-C|_p^{2|V_{n-1}|}}\nonumber\\[2mm]
&=&\sqrt{|C|_p}\bigg(\frac{\sqrt{|C|_p}}{|A-C|_p}\bigg)^{2|V_{n-1}|}\to
\infty \ \ \ \textrm{as} \ \ n\to\infty.
\end{eqnarray}
which implies $\mu_{1,2}$ is unbounded.

Hence, \eqref{2mmm} and \eqref{4mmm} imply the existence of the
phase transition. This completes the proof.
\end{proof}

\begin{rem} This proved theorem confirms that if the dynamical system associated with a model
 has at least
two repelling fixed points, then for the model exhibits a phase
transition. We stress that the considered $\l$-model has the
stronger phase transition (see \cite{MD}).
\end{rem}

\begin{rem} If one takes $\r=p$ and $\l(-1,1)=2m$ for some
$m\in\bn$, then $\sqrt{-p^{2m}}$ exists if and only if $p\equiv
1(\textrm{mod}\ 4)$.
\end{rem}

\subsection{Regime $|A|_p^2>|C|_p$}

In this subsection we prove the following result.

\begin{thm}\label{QPT}
Let $p\geq3$, $|\r|_p<1$. Assume that for the function $\l$ one has
\begin{equation}\label{lll-1}
 2\l(1,1)<\l(-1,1),  \ \ \l(1,-1)=\l(-1,-1)=0.
\end{equation}
 Then there exist the quasi phase transition for the
$\l$-model \eqref{ham1} on the Cayley tree of order two.
\end{thm}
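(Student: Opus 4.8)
The plan is to produce two distinct \emph{bounded} translation-invariant generalized $p$-adic quasi Gibbs measures, which is exactly what a quasi phase transition requires. Under the hypothesis \eqref{lll-1} one has $2\l(1,1)<\l(-1,1)$, i.e. $|A|_p^2>|C|_p$, so Theorem \ref{exist-11} supplies three such measures $\m_0,\m_1,\m_2$, built from the fixed points $x_0,x_1,x_2$ of $f$ in Theorem \ref{g1-fix}(iii). Since $|x_0|_p=1$ while $|x_2|_p=1/|A|_p>1$ by \eqref{x12-2}, the measures $\m_0$ and $\m_2$ are genuinely different, and it suffices to show that both are bounded.

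Boundedness of $\m_0$ is immediate and repeats \eqref{2mmm}: as $x_0\in\ce_p$ and $|C|_p<1$ we get $|x_0|_p=1$ and $|Cx_0^2+1|_p=1$, so \eqref{1mmm} reduces to $|\m_{n,0}(\s)|_p=|\r|_p^{H_n(\s)}\le 1$. For $\m_2$ I would substitute $h_*=x_2$ into \eqref{1mmm}; by \eqref{3x111} the denominator trivializes, $|Cx_2^2+1|_p=1$, and $|x_2|_p=|A|_p^{-1}=|\r|_p^{-\l(1,1)}$, so that
\[
|\m_{n,\r,2}(\s)|_p=|\r|_p^{\,H_n(\s)-\l(1,1)\left(\sum_{x\in W_n}\s(x)+|V_{n-1}|\right)}.
\]
Because $|\r|_p<1$, boundedness of $\m_2$ is equivalent to the uniform lower bound
\[
H_n(\s)-\l(1,1)\sum_{x\in W_n}\s(x)\ \ge\ \l(1,1)\,|V_{n-1}|-C
\]
for some constant $C$ and all $n\in\bn$, $\s\in\Om_{V_n}$; once this holds, $\sup_{n,\s}|\m_{n,\r,2}(\s)|_p<\infty$, and since any element of the algebra lies in some $\Om_{V_n}$ the strong triangle inequality upgrades this to boundedness of $\m_2$.

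Establishing this lower bound is the heart of the proof, and I expect it to be the main obstacle. The difficulty is that $H_n$ and the boundary magnetization $\sum_{x\in W_n}\s(x)$ are coupled across the whole tree: distributing the term $-\l(1,1)|V_{n-1}|$ over the vertices of $V_{n-1}$ and bounding each local contribution separately only yields the useless estimate $\ge -\l(1,1)|V_{n-1}|$, because the local terms cannot all be minimized simultaneously. Instead I would compute $\min_\s\big(H_n(\s)-\l(1,1)\sum_{x\in W_n}\s(x)\big)$ exactly by a backward recursion over the levels $W_n,W_{n-1},\dots,W_0$. Writing $R_\ell(s)$ for the minimal value of the part of the functional carried by the subtree rooted at a level-$\ell$ vertex, conditioned on that vertex carrying spin $s$, one has the boundary data $R_n(\pm1)=\mp\l(1,1)$ and, since every non-root vertex of $\G^2_+$ has two successors, the two-to-one recursion
\[
R_\ell(s)=2\min_{s'\in\{-1,1\}}\big\{\l(s,s')+R_{\ell+1}(s')\big\}.
\]
After factoring out the geometric growth $2^{\,n-\ell}$ this becomes a two-dimensional min-plus map whose rescaled iterates stabilize precisely because $2\l(1,1)<\l(-1,1)$ (the sign condition is exactly what keeps the $\l(-1,1)$ branch from ever being selected); solving it shows the minimum equals $\l(1,1)|V_{n-1}|$ up to a bounded correction coming only from the root. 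Hence the required cancellation holds, $\m_2$ is bounded, and together with the bounded $\m_0$ this yields the quasi phase transition. Unlike the unboundedness half of Theorem \ref{SPT}, where a single configuration $\s^{(-)}$ suffices, here one must control \emph{all} configurations at once, which is why the global minimization above is the crux. The same scheme applies to $\m_1$ via \eqref{x12-2} and \eqref{3x111} for $x_1$, confirming the dynamical prediction of Proposition \ref{dyn-bh}(iii) that all three fixed points give bounded measures.
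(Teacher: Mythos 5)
Your proposal is correct, and at the decisive step it takes a genuinely different---and in fact more solid---route than the paper. The paper's proof runs through the same reduction you use (Theorem \ref{exist-11}(iii), Lemma \ref{Znn}, formula \eqref{1mmm}, boundedness of $\m_0$ exactly as in \eqref{2mmm}), but it then disposes of the coupling between $H_n(\s)$ and the boundary magnetization by the single wholesale substitution \eqref{4mmm2}, $|\r|_p^{H_n(\s)}\leq |A|_p^{|V_n|-1}$, followed by the counting identities \eqref{WVn2}. That substitution amounts to claiming $H_n(\s)\geq \l(1,1)\left(|V_n|-1\right)$ for \emph{every} $\s$, which is false pointwise (for $\s\equiv -1$ one has $H_n(\s)=0$ while $|A|_p^{|V_n|-1}<1$); so your observation that no local, termwise estimate can work is well placed, and your global minimization is not merely an alternative but supplies the justification that the paper's step lacks. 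Your min-plus recursion is set up correctly and does close: writing $u_d=R_{n-d}(+1)$, $v_d=R_{n-d}(-1)$ with $u_0=-\l(1,1)$, $v_0=\l(1,1)$, the hypothesis $2\l(1,1)<\l(-1,1)$ forces the $\l(-1,1)$ branch to lose at every step of $v_d=2\min\{\l(-1,1)+u_{d-1},\,v_{d-1}\}$, and induction gives $u_d=(2^d-2)\l(1,1)$, $v_d=2^d\l(1,1)$ for $d\geq 1$; hence $\min_\s\big(H_n(\s)-\l(1,1)\sum_{x\in W_n}\s(x)\big)=(2^n-2)\l(1,1)=\l(1,1)\left(|V_{n-1}|-1\right)$, so $\sup_{n,\s}|\m_{n,\r,2}(\s)|_p=|A|_p^{-1}$, exactly the constant appearing at the end of \eqref{5mmm1}. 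The same recursion with boundary data $(\l(1,1)-\l(-1,1))s$ handles $\m_1$: after adding the offset $(\l(-1,1)-3\l(1,1))|V_{n-1}|$ the minimum is the constant $\l(1,1)-\l(-1,1)$, i.e.\ $\sup_{n,\s}|\m_{n,\r,1}(\s)|_p=|A|_p/|C|_p$ (a constant; note the paper's final line for $\m_1$ states $|C|_p/|A|_p$, an inversion, since $|V_{n-1}|-|W_n|=-1$). In sum, the paper's chain is two lines but rests, as written, on an invalid inequality whose failures happen to be compensated elsewhere; your dynamic-programming computation is longer but exact, yields the sharp suprema, and together with the bounded $\m_0$ produces two bounded measures with distinct boundary functions ($|x_0|_p=1\neq |A|_p^{-1}=|x_2|_p$), which is precisely the paper's definition of a quasi phase transition. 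The only missing piece in your sketch is the short induction solving the recursion, which, as checked above, goes through.
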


\begin{proof} Theorem
\ref{exist-11} (iii) implies the existence of three
translation-invariant generalized $p$-adic Gibbs measures
$\mu_0,\mu_1,\mu_2$.

We remark that according to Proposition \ref{dyn-bh} the dynamical
system $f$ has two attractive and one neutral fixed points. This
indicated to the existence of quasi phase transition.

In the considered regime, by the same argument as in the proof of
Theorem \ref{SPT} one can establish that the measure $\m_0$ is also
bounded. Moreover, one can find
\begin{eqnarray}\label{3mmm1}
|\mu_{n,\r,1,2}(\sigma)|_{p}&=&\frac{|\r|_p^{H_n(\s)}|x_{1,2}|_p^{\sum_{x\in
W_n}\s(x)}|x_{1,2}|_p^{|V_{n-1}|}}{{\big|Cx_{1,2}^2+1\big|_p^{2|V_{n-1}|}}}.
\end{eqnarray}

Now consider the measure $\mu_{n,\r,1}$. Then from \eqref{3mmm1},
\eqref{x12-2} and \eqref{3x111} we obtain
\begin{eqnarray}\label{4mmm1}
|\mu_{n,\r,1}(\sigma)|_{p}&=&\frac{|\r|_p^{H_n(\s)}\bigg(\frac{|A|_p}{|C|_p}\bigg)^{\sum_{x\in
W_n}\s(x)+|V_{n-1}|}}{\bigg(\frac{|A|_p^2}{|C|_p}\bigg)^{2|V_{n-1}|}}\nonumber
\\[2mm]
&=&\frac{|\r|_p^{H_n(\s)}|A|_p^{\sum_{x\in
W_n}\s(x)-3|V_{n-1}|}}{|C|_p^{\sum_{x\in W_n}\s(x)-|V_{n-1}|}}.
\end{eqnarray}

Due to $H_n(\s)\leq |V_n|-1$ and $|C|_p<|A|_p$ we find
\begin{eqnarray}\label{4mmm2}
|\r|^{H_n(\s)}\leq |A|_p^{|V_n|-1}.
\end{eqnarray}
Taking into account the last expression with
$$-|W_n|\leq \sum_{x\in
W_n}\s(x)\leq |W_n|,
$$
and \eqref{WVn2} from \eqref{4mmm1} one gets
\begin{eqnarray*}
|\mu_{n,\r,1}(\sigma)|_{p}&\leq&\frac{|A|_p^{|V_n|-3|V_{n-1}|-1+\sum_{x\in
W_n}\s(x)}}{|C|_p^{\sum_{x\in
W_n}\s(x)-|V_{n-1}|}}\nonumber\\[2mm]
&=&\bigg(\frac{|C|_p}{|A|_p}\bigg)^{|V_{n-1}|-\sum_{x\in
W_n}\s(x)}\nonumber\\[2mm]
&\leq&\bigg(\frac{|C|_p}{|A|_p}\bigg)^{|V_{n-1}|-|W_n|}\nonumber\\[2mm]
&=&\frac{|C|_p}{|A|_p}<1.
\end{eqnarray*}
This means that the measure $\m_1$ is bounded.\\

Let us consider the measure $\mu_{n,\r,2}$. Then from \eqref{3mmm1},
\eqref{x12-2}, \eqref{3x111},\eqref{4mmm2} one finds
\begin{eqnarray}\label{5mmm1}
|\mu_{n,\r,2}(\sigma)|_{p}&=&|\r|_p^{H_n(\s)}\bigg(\frac{1}{|A|_p}\bigg)^{\sum_{x\in
W_n}\s(x)+|V_{n-1}|}\\[2mm]
&\leq &|A|_p^{|V_n|-1-\sum_{x\in W_n}\s(x)-|V_{n-1}|}\nonumber
\\[2mm]
&=&|A|_p^{|V_{n-1}|-\sum_{x\in W_n}\s(x)}\nonumber
\\[2mm]
&\leq&|A|_p^{|V_{n-1}|-|W_{n-1}|}\nonumber
\\[2mm]
&=&\frac{1}{|A|_p}\nonumber
\end{eqnarray}
This means that the measure $\m_2$ is bounded as well.

Consequently, we infer the existence of the quasi phase transition.
This completes the proof.
\end{proof}

By $\s\lceil_{W_n}$ we denote the restriction of a configuration
$\s$ to $W_n$. Define a configurations $\s^{(\pm)}_{n}$ on $W_n$ by
$$
\s^{(\pm)}_n(x)=\pm 1, \ \ \forall x\in W_n.
$$

\begin{cor} Let $p\geq3$, $|\r|_p<1$ and assume \eqref{lll-1} is satisfied.
Let
\begin{eqnarray}\label{Ann}
A_n^{(\pm)}=\{\s\in\Omega_{V_n}: \ \s\lceil_{W_n}=\s^{(\pm)}_n\}
\end{eqnarray}
 Then one has
\begin{eqnarray}\label{m12-1}
&&\bigg|\frac{\m_{n,\r,1}(\s)}{\m_{n,\r,2}(\s)}\bigg|_p\to 0,
\ \ \ n\to\infty, \ \ \textrm{for all} \ \ \s\in A_n^{(-)}\\[2mm]
\label{m12-2}
&&\bigg|\frac{\m_{n,\r,1}(\s)}{\m_{n,\r,2}(\s)}\bigg|_p\to \infty, \
\ \ \ n\to\infty,  \ \ \textrm{for all} \ \ \s\in A_n^{(+)}.
\end{eqnarray}
\end{cor}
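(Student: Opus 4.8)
The plan is to divide the two explicit norm formulas produced inside the proof of Theorem \ref{QPT} and to show that their quotient collapses to a single power of $|A|_p^2/|C|_p$, whose exponent is controlled entirely by $\sum_{x\in W_n}\s(x)$. The two assertions then follow by evaluating this exponent on the classes $A_n^{(+)}$ and $A_n^{(-)}$ separately.

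First I would recall formula \eqref{3mmm1}, which expresses $|\m_{n,\r,1,2}(\s)|_p$ for a translation-invariant solution in terms of $|x_{1,2}|_p$ and $|Cx_{1,2}^2+1|_p$. Substituting the norms \eqref{x12-2} of the fixed points and the values \eqref{3x111} gives closed forms for $|\m_{n,\r,1}(\s)|_p$ and $|\m_{n,\r,2}(\s)|_p$. The decisive structural point is that both measures carry the identical factor $|\r|_p^{H_n(\s)}$; since $|\cdot|_p$ is multiplicative, this factor cancels in the quotient, and with it every dependence on the restriction of $\s$ to $V_{n-1}$. Hence the quotient depends on $\s$ only through $S_n:=\sum_{x\in W_n}\s(x)$. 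Using $|x_1|_p/|x_2|_p=|A|_p^2/|C|_p$ together with $|Cx_1^2+1|_p=|A|_p^2/|C|_p$ and $|Cx_2^2+1|_p=1$, the division of the two instances of \eqref{3mmm1} reduces to
\[
\bigg|\frac{\m_{n,\r,1}(\s)}{\m_{n,\r,2}(\s)}\bigg|_p=\bigg(\frac{|A|_p^2}{|C|_p}\bigg)^{S_n-|V_{n-1}|},
\]
where, in the present regime, $|A|_p^2/|C|_p>1$.

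It then remains to evaluate the exponent on each class. For $\s\in A_n^{(-)}$ one has $\s\lceil_{W_n}\equiv-1$, so $S_n=-|W_n|$ and the exponent is $-|W_n|-|V_{n-1}|\to-\infty$; as the base exceeds $1$, this yields \eqref{m12-1} at once. For $\s\in A_n^{(+)}$ one has $S_n=|W_n|$, so the exponent equals $|W_n|-|V_{n-1}|$, and \eqref{m12-2} is thereby reduced to controlling the growth of this difference, which I would analyse through the recurrences \eqref{WVn2}.

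The $A_n^{(+)}$ case is where I expect the real difficulty. Because $S_n\le|W_n|$ for every configuration, the exponent $S_n-|V_{n-1}|$ never exceeds $|W_n|-|V_{n-1}|$ and attains that bound exactly on $A_n^{(+)}$; consequently the divergence asserted in \eqref{m12-2} rests entirely on how fast $|W_n|-|V_{n-1}|$ grows, and extracting this precisely from \eqref{WVn2} is the delicate step. The complementary estimate for $A_n^{(-)}$ is effortless, since there the exponent is unboundedly negative. By contrast, the algebraic reduction of the quotient itself—the substitution of \eqref{x12-2} and \eqref{3x111} into \eqref{3mmm1} and the cancellation of $|\r|_p^{H_n(\s)}$—is routine bookkeeping that carries none of the weight.
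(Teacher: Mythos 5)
Your reduction of the quotient is correct, and for \eqref{m12-1} your argument is complete and matches the paper's: dividing \eqref{4mmm1} by \eqref{5mmm1} (equivalently, the two instances of \eqref{3mmm1} with \eqref{x12-2} and \eqref{3x111}) gives, with $S_n=\sum_{x\in W_n}\s(x)$,
\[
\bigg|\frac{\m_{n,\r,1}(\s)}{\m_{n,\r,2}(\s)}\bigg|_p
=\frac{|A|_p^{2S_n-2|V_{n-1}|}}{|C|_p^{S_n-|V_{n-1}|}}
=\bigg(\frac{|A|_p^{2}}{|C|_p}\bigg)^{S_n-|V_{n-1}|},
\]
and on $A_n^{(-)}$ the exponent is $-|W_n|-|V_{n-1}|\to-\infty$ with base $>1$, which yields \eqref{m12-1}.

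The genuine gap is exactly the step you deferred as ``delicate,'' and it fails outright. By \eqref{WVn2} with $k=2$ one has $|W_n|=(k-1)|V_{n-1}|+1=|V_{n-1}|+1$, so $|W_n|-|V_{n-1}|$ is identically $1$ for all $n$: the quantity whose growth your plan relies on does not grow. Consequently, on $A_n^{(+)}$ your (correct) formula gives the \emph{constant} value $|A|_p^{2}/|C|_p$, which exceeds $1$ but certainly does not tend to $\infty$, so \eqref{m12-2} cannot be reached along your route. You should also be aware of how the paper itself gets divergence: its proof passes through \eqref{m12-3}, where the exponent of $|A|_p$ is recorded as $\sum_{x\in W_n}\s(x)-2|V_{n-1}|$, whereas faithful division of \eqref{4mmm1} by \eqref{5mmm1} produces $2\sum_{x\in W_n}\s(x)-2|V_{n-1}|$, exactly as you computed --- the copy of $S_n$ contributed by $1/|\m_{n,\r,2}(\s)|_p$ has been dropped there. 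The divergence in \eqref{m12-4} rests on that slip (for $A_n^{(-)}$ the slip is harmless, both versions tend to $0$). So as the formulas \eqref{4mmm1}--\eqref{5mmm1} stand, your computation in fact shows that \eqref{m12-2} does not follow, and no sharper analysis of $|W_n|-|V_{n-1}|$ via \eqref{WVn2} will rescue it; the discrepancy between your exponent and \eqref{m12-3} needs to be resolved before the $A_n^{(+)}$ assertion can be proved.
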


\begin{proof} Take any $\s$ from $A_n^{(-)}$. Then
$$\sum_{x\in W_n}\s(x)=-|W_n|,$$ so from \eqref{4mmm1},\eqref{5mmm1}
we get
\begin{eqnarray}\label{m12-3}
\bigg|\frac{\m_{n,\r,1}(\s)}{\m_{n,\r,2}(\s)}\bigg|_p
&=&\frac{|A|_p^{\sum_{x\in W_n}\s(x)-2|V_{n-1}|}}{|C|_p^{\sum_{x\in
W_n}\s(x)-|V_{n-1}|}}\nonumber\\[2mm]
&=&\frac{|A|_p^{-|W_n|-2|V_{n-1}|}}{|C|_p^{-|W_n|-|V_{n-1}|}}\nonumber\\[2mm]
&=&\frac{|C|_p^{2|V_{n-1}|+1}}{|A|_p^{3|V_{n-1}|+1}}\nonumber\\[2mm]
&=&\frac{|C|_p}{|A|_p}\bigg(\frac{|C|_p^2}{|A|_p^3}\bigg)^{|V_{n-1}|}.
\end{eqnarray}

From $|C|_p<|A|_p^2$ we infer that $|C|^2_p<|A|^4_p<|A|_p^3$, which
with \eqref{m12-3} implies \eqref{m12-1}.

Now take $\s\in A_n^{(+)}$. Then using the same argument as above
one finds
\begin{eqnarray}\label{m12-4}
\bigg|\frac{\m_{n,\r,1}(\s)}{\m_{n,\r,2}(\s)}\bigg|_p
&=&\frac{|A|_p^{|W_n|-2|V_{n-1}|}}{|C|_p^{|W_n|-|V_{n-1}|}}\nonumber\\[2mm]
&=&\frac{|A|_p^{1-|V_{n-1}|}}{|C|_p}\nonumber\\[2mm]
&\geq&\frac{|A|_p}{|A|_p^{|V_{n-1}|}}\to \infty \ \ \textrm{as} \ \
n\to\infty.
\end{eqnarray}
This completes the proof.
\end{proof}

\begin{rem} This corollary shows that the bounded measures $\m_1$ and $\m_2$ are "singular"
on the sets $A_n^(\pm)$, which yields that they are different from
each other.
\end{rem}

\section{Conclusions}

It is known that in the investigate of phase transitions the
renomalization method is one of the powerful tools in theoretical
and mathematical physics. In real case, this method has greatly
enhanced our understanding of complex properties of models. The
interplay of statistical mechanics with chaos theory has even led to
novel conceptual frameworks in different physical settings \cite{E}.
Therefore, in the present paper, we have proposed to investigate
phase transition phenomena from renomalization technique
perspective. In the paper, we considered $p$-adic $\l$-model on the
Cayley tree. Note that if one takes $\l(x,y)=Nxy$, then such a model
reduces to the Ising model. This model was studied in \cite{MD,M13}.
But in the paper, we have concentrated ourselves to a totally
different model than the Ising one. For such a model, we have
considered two regimes with respect to a parameters $A$ and $C$. It
was proved the existence of generalized $p$-adic Gibbs measures in
both regimes. We obtained a $p$-adic dynamical system and
investigate its fixed points. In the first regime we establish that
the dynamical system has one attractive and two repelling fixed
points, which predicts the existence of a phase transition. In the
second regime the system has two attractive and one neutral fixed
points, which predicts the existence of a quasi phase transition.
Main results of the present paper are to verify (i.e. rigorously
prove) and confirm that the indicated predictions (via dynamical
systems point of view) are indeed true.
 These investigations show that there are some
similarities with the real case, for example, the existence of two
repelling fixed points implies the occurrence of the phase
transition. But there are also some differences. Namely, when the
dynamical system has two attractive fixed points, there occurs quasi
phase transition, unlike in real case, there is not such kind of
behavior. Finally, using such a method one can study other $p$-adic
models over trees.

\section*{Acknowledgement} The author thanks the MOE grant
ERGS13-024-0057.

\end{document}